\newtheorem{thm}{Theorem}
\newtheorem{lem}[thm]{Lemma}
\newtheorem{cor}[thm]{Corollary}
\newtheorem{obs}[thm]{Observation}
\newtheorem{defi}[thm]{Definition}
\newtheorem{inv}[thm]{Invariant}
\newproof{proof}{Proof}
\newcommand{\N}{\mathbb{N}}
\newcommand{\Nz}{\mathbb{N}_0}
\begin{document}
\begin{frontmatter}

\title{An Efficient Data Structure for\\ Dynamic Two-Dimensional Reconfiguration\tnoteref{ccc}\tnoteref{arcs}}
\tnotetext[ccc]{This work was supported by the DFG Research Group
  FOR-1800, ``Controlling Concurrent Change'', under contract number
  FE407/17-1.}
\tnotetext[arcs]{A preliminary extended abstract of this
  paper appears in ARCS2016~\cite{ARCS}.}

\author[tubs]{S\'andor P.\ Fekete\corref{cor}}\ead{s.fekete@tu-bs.de}
\author[tubs]{Jan-Marc Reinhardt}\ead{j-m.reinhardt@tu-bs.de}
\author[tubs]{Christian Scheffer}\ead{scheffer@ibr.cs.tu-bs.de}

\address[tubs]{Department of Computer Science, TU Braunschweig,
  Germany.}
\cortext[cor]{Corresponding author}

\begin{abstract}
In the presence of dynamic insertions and deletions into a partially reconfigurable FPGA, 
fragmentation is unavoidable. This poses the challenge of developing efficient approaches to 
dynamic defragmentation and reallocation.
One key aspect is to develop efficient
algorithms and data structures that exploit the two-dimensional geometry 
of a chip, instead of just one.
We propose a new method for this task,
based on the fractal structure of a quadtree,
which allows dynamic segmentation of the chip area,
along with dynamically adjusting the necessary communication
infrastructure. We describe a number of algorithmic
aspects, and present different solutions.
{\color{black} We also provide a number of basic simulations that indicate that the theoretical worst-case bound may be
pessimistic.}
\end{abstract}

\begin{keyword}
FPGAs \sep partial reconfiguration \sep two-dimensional reallocation
\sep defragmentation \sep dynamic data structures \sep insertions and
deletions
\end{keyword}

\end{frontmatter}

\section{Introduction}
\label{sec:intro}

In recent years, a wide range of methodological developments on FPGAs
{\color{black} aim at combining} the performance of an ASIC implementation 
with the flexibility of software realizations. One important development
is partial runtime reconfiguration, which allows overcoming significant area overhead,
monetary cost, higher power consumption, or speed penalties (see e.g.~\cite{rose_FPGAgap}).
As described in~\cite{fks-ddrd-12}, the idea is to load a sequence of different 
modules by partial runtime reconfiguration. 

In a general setting, we are faced with a dynamically changing set of modules,
which may be modified by deletions and insertions. Typically, there is no full
a-priori knowledge of the arrival or departure of modules, i.e., we have to deal 
with an online situation. The challenge is to ensure that
arriving modules can be allocated. Because previously deleted modules may
have been located in different areas of the layout, free space may be fragmented,
making it necessary to {\em relocate} existing modules in order to provide
sufficient area. In principle, this can be achieved by completely {\em defragmenting} 
the layout when necessary; however, the lack of control over
the module sequence makes it hard to avoid frequent full defragmentation,
resulting in expensive operations for insertions if a na\"ive approach is used.

Dynamic insertion and deletion are classic problems of Computer Science.
Many data structures (from simple
to sophisticated) have been studied
that result in low-cost operations and efficient maintenance of
a changing set of objects. These data structures are mostly
one-dimensional (or even dimensionless) by nature, making it hard to 
fully exploit the 2D nature of an FPGA. In this
paper, we propose a 2D data structure based on a quadtree
for maintaining the module layout under partial reconfiguration and reallocation.
The key idea is to control the overall structure of the layout, 
such that future insertions can be performed with a limited
amount of relocation, even when free space is limited.

Our main contribution is to introduce a 2D
approach that is able to achieve provable constant-factor efficiency
for different types of relocation cost. To this end,
we give detailed mathematical proofs for a slightly simplified setting, along
with sketches of extensions to the more general cases. {\color{black} We also provide
basic simulation runs for various scenarios, indicating the quality of
our approach.}

The rest of this paper is organized as follows. The following Section~2 
provides a survey of related work. For better accessibility of the key 
ideas and due to limited space, our technical description 
in Section~3, Section~4, and Section~5  focuses on the case of discretized quadratic modules
on a quadratic chip area. We discuss in Section~6 how general
rectangles can be dealt with, with corresponding {\color{black} simulations}
in Section~7. {\color{black}Along the same lines, we do not explicitly elaborate on 
the dynamic maintenance of the communication infrastructure; see Figure~\ref{fig:config} for the basic
idea. Further details are left to future work, with groundwork laid in~\cite{meyer}.}

\section{Related Work}

The problem considered in our paper has a resemblance to one-dimensional
{\em dynamic storage allocation}, in which a sequence of storage requests of varying
size have to be assigned to a block of memory cells, such
that the length of each block corresponds to the size of the request.
In its classic form (without virtual memory), this block needs to be contiguous;
in our setting, contiguity of two-dimensional allocation is a must, as reconfigurable devices 
do not provide techniques such as paging and virtual memory.
Once the allocation has been performed,
it is static in space: after a block has been occupied,
it will remain fixed until the corresponding data is no longer needed
and the block is released. As a consequence, a sequence of
allocations and releases can result in fragmentation of
the memory array, making it hard or even impossible to store
new data. 

On the practical side,
classic buddy systems partition the one-dimensional storage into a number of standard block
sizes and allocate a block in a smallest free standard interval 
to contain it. Differing only in the
choice of the standard size, various systems have been 
proposed \cite{Bromley80,Hinds75,Hirs73,Know65,Shen74}.
Newer approaches based on cache-oblivious structures
in memory hierarchies
include Bender et al.~\cite{Bender05,Bender05a}.
Theoretical work on one-dimensional contiguous allocation
includes
Bender and Hu~\cite{bender_adaptive_2007}, who consider
maintaining $n$ elements in sorted
order, with not more than $O(n)$ space. 
Bender et
al.~\cite{bender_maintaining_2009} aim at reducing
fragmentation when maintaining $n$ objects that require
contiguous space.  Fekete et
al.~\cite{fks-ddrd-12} study
complexity results and consider practical applications on FPGAs.
Reallocations have also been studied in the context of heap
allocation. Bendersky and Petrank~\cite{bendersky_space_2012} observe
that full compaction, i.e., creating a contiguous block of free space
on the heap,
is prohibitively expensive and consider partial compaction. 
Cohen and Petrank~\cite{cohen_limitations_2013} extend these
to practical applications.
Bender et al.~\cite{bender_cost-oblivious_2014} 
describe a strategy that achieves good amortized movement costs
for reallocations, where allocated blocks {\color{blue}can} be moved at a cost to a new position that is disjoint
from with the old position. 
Another paper by the same authors~\cite{bender_reallocation_2014} deals
with reallocations in the context of scheduling. 
Examples for packing problems in applied computer science come from
allocating FPGAs. Fekete et al.~\cite{fekete_efficient_2014} examined
a problem dealing with the allocation of different types of resources
on an FPGA that had to satisfy additional properties. For example, to
achieve specified clock frequencies diameter restrictions had to be
obeyed by the packing. The authors were able to solve the problem
using integer linear programming techniques.

Over the years, a large variety of methods and results for
allocating storage have been proposed. The classical sequential fit
algorithms, First Fit, Best Fit, Next Fit and Worst Fit can be found
in Knuth~\cite{Knuth97} and Wilson et al.~\cite{Wils95}.
These are closely related to problems of offline and online packing of
two-dimensional objects. One of the earliest considered packing variants is the problem of finding
a dense packing of a known set of squares for a rectangular container; see
Moser~\cite{m66}, Moon and Moser~\cite{mm67} and
Kleitman and Krieger~\cite{kk70}, as well as more recent work by
Novotn{\'y}~\cite{n95,n96} and Hougardy~\cite{h11}.
There is also a considerable number of other related work on offline packing squares, cubes, or hypercubes;
see~\cite{ck-soda04,js-ptas08,h09} for prominent examples.
The {\em online} version of square packing has been studied by
Januszewski and Lassak~\cite{jl97} and Han et al.~\cite{hiz08}, with more recent
progress due to Fekete and Hoffmann~\cite{fh-ossp-13,fh-ossp-17}.
A different kind of online square packing was considered by 
Fekete et al.~\cite{fks-osp-09,fks-ospg-14}. The container is an unbounded strip,
into which objects enter from above in a Tetris-like fashion; any new
object must come to rest on a previously placed object, and the 
path to its final destination must be collision-free. 

There are various ways to generalize the online packing of squares; see Epstein and van Stee~\cite{es-soda04,es05,es07} for online bin packing variants
in two and higher dimensions. In this context, also see parts of Zhang et al.~\cite{zcchtt10}.
A natural generalization of online packing of squares is online packing of rectangles,
which have also received a serious amount of attention. Most notably, online strip packing
has been considered; for prominent examples, see Azar and Epstein~\cite{ae-strip97}, who employ 
shelf packing, and Epstein and van Stee~\cite{es-soda04}.
Offline packing of rectangles into a unit square or rectangle has also been considered
in different variants; for examples, see \cite{fgjs05}, as well as \cite{jz-profit07}.
Particularly interesting for methods for online packing into a single container may be the work by Bansal et
al.~\cite{bcj-struct-09}, who show that for any complicated packing of rectangular items into a rectangular container,
there is a simpler packing with almost the same value of items. For another variant of online allocation, see~\cite{frs-csdaosa-14},
which extends previous work on optimal shapes for allocation~\cite{bbd-wosc-04}.

From within the FPGA community, there is a huge amount of related work
dealing with problems related to relocation.
Becker et al.~\cite{blc-erpbr-07} present a method for
enhancing the relocability of partial 
bitstreams for FPGA runtime configuration, with a special focus on
heterogeneities. They study the underlying prerequisites and
technical conditions for dynamic relocation. 
Gericota et al.~\cite{gericota05} present a relocation procedure for
Configurable Logic Blocks (CLBs) that is able to carry out online
rearrangements, defragmenting the available FPGA resources without
disturbing functions currently running. Another relevant approach was
given by Compton et al.~\cite{clckh-crdrt-02}, who present a new
reconfigurable architecture design extension based on the ideas of
relocation and defragmentation. 
Koch et al.~\cite{kabk-faepm-04} introduce efficient hardware extensions to
typical FPGA architectures in order to allow hardware task
preemption. 
These papers do not consider the algorithmic implications and how the
relocation capabilities can be exploited
to optimize module layout in a fast, practical fashion, which is what we consider in this paper. Koester et
al.~\cite{koester07} also address the problem of
defragmentation. Different defragmentation algorithms that minimize
different types of costs are analyzed. 

The general concept of defragmentation is well known, and has been applied to
many fields, e.g., it is typically employed for memory management. Our approach
is significantly different from defragmentation techniques which have been
conceived so far: these require a freeze of the system, followed by
a computation of the new layout and a complete reconfiguration of all modules
at once. Instead, we just copy one module 
at a time, and simply switch the execution to the new module as soon as the move is complete.
{\color{blue} This concept aims at providing a seamless, dynamic defragmentation
of the module layout, eventually resulting in much better
utilization of the available space for modules.} All this 
makes our work a two-dimensional extension of the one-dimensional approach 
described in \cite{fks-ddrd-12}.

\section{Preliminaries}

We are faced with an (online) sequence of configuration requests
that are to be carried out on a rectangular chip area. 
A request may consist of {\em deleting} an existing module, which 
simply means that the module may be terminated and
its occupied area can be released to free space.
On the other hand, 
a request may consist of {\em inserting} a new module,
requiring an axis-aligned, rectangular module
to be allocated to an unoccupied section of the chip;
if necessary, this may require rearranging the 
allocated modules in order to create free space of
the required dimensions, incurring some cost.

Previous work on reallocation problems of this type 
has focused on one-dimensional approaches. 
Using these in a two-dimensional setting does not result in 
satisfactory performance. 
The main contribution of our paper is to demonstrate a two-dimensional
approach that is able to achieve an efficiency that is provably within a constant factor of the optimum,
even in the worst case, which 
requires a variety of mathematical details. 
{\color{black} For better accessibility of the key ideas, 
our technical description in the rest of this Section~3,
as well as in Section~4 and Section~5 focuses on the case of quadratic modules
on a quadratic chip area. Section~6 addresses how to deal with general
rectangles.}


The rest of this section provides technical notation and descriptions.
A square is called \emph{aligned} if its edge length equals $2^{-r}$
for any $r \in \Nz$. It is called an $r$-square if its size is
$2^{-r}$ for a specific $r \in \Nz$. The \emph{volume} of an $r$-square $Q$ is $|Q|=4^{-r}$.
A {\em quadtree} is a rooted tree in which every node has either four
children or none. As a quadtree can be interpreted as the subdivision
of the unit square into nested $r$-squares, we can use quadtrees to
describe certain packings of aligned squares into the unit square. 

\begin{defi}
A \emph{(quadtree) configuration} $T$ assigns a set of axis-aligned squares to the
nodes of a quadtree. The nodes with a distance $j$ to the root of the
quadtree form \emph{layer} $j$. Nodes are also called \emph{pixels}
and pixels in layer $j$ are called \emph{$j$-pixels}. Thus,
$j$-squares can only be assigned to $j$-pixels. A
pixel $p$ \emph{contains} a square $s$ if $s$ is assigned to $p$ or
one of the children of $p$ contains $s$. A $j$-pixel that has
an assigned $j$-square is \emph{occupied}. 
For a pixel $p$ that is not occupied, with $P$ the unique path from $p$ to the root,
we call $p$ 
\begin{itemize}
\item \emph{blocked} if there is a $q \in P$ that is occupied,
\item \emph{free} if it is not blocked,
\item \emph{fractional} if it is free and contains a square,
\item \emph{empty} if it is free but not fractional,
\item \emph{maximally empty} if it is empty but its parent is not.
\end{itemize}

The \emph{height $h(T)$} of a configuration $T$ is defined as $0$ if the root of $T$ is empty. Otherwise, as the maximum $i+1$ such that $T$ contains an $i$-square.
\end{defi}

\begin{obs}\label{obs:disjoint}
Let $p \ne q$ be two maximally empty pixels and $P$ and $Q$ be the
paths from the root to $p$ and $q$, respectively. Then $p \notin Q$
and $q \notin P$.
\end{obs}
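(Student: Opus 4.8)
The plan is to prove Observation~\ref{obs:disjoint} by contradiction, leveraging the definition of ``maximally empty'' directly. Suppose, for the sake of contradiction, that $p \in Q$ (the case $q \in P$ is symmetric, or follows by swapping the roles of $p$ and $q$). Since $p \ne q$ and $p$ lies on the path $Q$ from the root to $q$, the pixel $p$ must be a \emph{proper} ancestor of $q$; in particular, $q$ lies strictly below $p$, so there is at least one node on $Q$ strictly between the root and $q$ that is a child-side continuation through $p$ --- concretely, $p$ has a child $p'$ that also lies on $Q$ and is an ancestor of $q$ (possibly $p' = q$ itself).

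The key step is then to observe that emptiness is inherited downward along quadtree paths. Since $q$ is maximally empty, $q$ is in particular empty, hence free, meaning no pixel on the path from $q$ to the root is occupied. Because $p$ and $p'$ both lie on that path, neither $p$ nor $p'$ is occupied. Moreover, $q$ being free and containing no square forces every pixel between $q$ and the root to contain no assigned square on the subtree hanging off the $Q$-path in a way that $q$ is empty --- more carefully: if $p'$ contained a square $s$, then since $p'$ is free (no ancestor occupied) $p'$ would be fractional; but any square contained in $p'$ is either in the subtree rooted at $q$ or in a sibling subtree of $q$ under $p'$. A square in the subtree of $q$ would contradict $q$ being empty; a square in a sibling subtree does not affect whether $q$ is empty, so this line needs the stronger fact. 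The clean argument is: $p$ is an ancestor of $q$, and $q$ is empty, so $p$ cannot be occupied (as just noted); I claim $p$ is in fact empty. Indeed $p$ is free since $q$ is free and $p$ lies on $Q$ above $q$. And $p$ contains no square: this is exactly where I use that $q$ is \emph{maximally} empty applied at the right level --- actually the direct route is to note that $p'$, the child of $p$ on the path to $q$, is empty (it is free, and it contains no square because its only descendants carrying squares would have to be in $q$'s subtree or a sibling of $q$; the subtree of $q$ is square-free by $q$ empty, and... ).

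The cleanest correct argument, and the one I would ultimately write, avoids this subtlety by induction on the quadtree: I claim that if a pixel $x$ is empty, then every ancestor $y$ of $x$ that is free is either empty or fractional, and crucially the maximally empty pixel $q$ has the property that its parent is \emph{not} empty. Combine this with: $p$ is a proper ancestor of $q$ that is free (shown above), so $p$ is an ancestor of $q$'s parent or equals $q$'s parent. If $p$ equals $q$'s parent, then $p$ is not empty by maximality of $q$, yet $p$ is free and --- since $p$ is maximally empty by hypothesis --- $p$ \emph{is} empty, a contradiction. If $p$ is a strict ancestor of $q$'s parent, then $q$'s parent lies strictly between $p$ and $q$ on $Q$; since $p$ is empty (being maximally empty) and emptiness propagates to all descendants until an occupied pixel is hit --- and no pixel on $Q$ between $p$ and $q$ is occupied, since $q$ is free --- the parent of $q$ is empty, again contradicting that $q$ is maximally empty.

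I expect the main obstacle to be stating the downward-propagation fact precisely: namely, that a descendant of an empty pixel, reached by a path containing no occupied pixel, is itself empty. This follows because ``empty'' means ``free and contains no square'': a descendant $d$ of an empty pixel $x$ is free iff no ancestor is occupied (true by hypothesis on the path), and $d$ contains no square because any square $d$ contains is also contained by $x$, contradicting $x$ empty. So the fact is actually immediate once unpacked, and the proof reduces to the short contradiction above. The only care needed is to handle the two sub-cases (whether $p$ is exactly the parent of $q$ or a higher ancestor) and to invoke the hypothesis that \emph{both} $p$ and $q$ are maximally empty.
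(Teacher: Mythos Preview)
Your argument is correct and follows the same route as the paper: assume $p \in Q$, use that $p$ is empty (by hypothesis) together with the downward propagation of emptiness to conclude that the parent of $q$ is empty, and contradict the maximality of $q$. The paper's proof is simply the streamlined version of yours---it skips the exploratory false starts and avoids your case split by noting at once that the parent $r$ of $q$ lies on the path from $p$ to $q$ (hence is $p$ or a descendant of $p$) and is therefore empty, so your two cases collapse into one line.
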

\begin{proof}
Without loss of generality, it is sufficient to show $p \notin Q$.
Assume $p \in Q$. Let $r \in Q$ be the parent of $q$. As $p$ is
maximally empty and $r$ is on the path from $p$ to $q$, $r$ must be
empty. However, that would imply that $q$ is not maximally empty, in
contradiction to the assumption.\qed
\end{proof}

	The \emph{(remaining) capacity $\mathrm{cap}(p)$} of a $j$-pixel $p$
is defined as $0$ if $p$ is occupied or blocked and as $4^{-j}$ if $p$ is empty. Otherwise, $\mathrm{cap}(p) := \sum_{p' \in C(p)} \mathrm{cap}(p')$, where $C(p)$ is the set of children of $p$. The \emph{(remaining)
capacity} of $T$, denoted $\mathrm{cap}(T)$, is the remaining
capacity of the root of $T$.

\begin{lem}\label{lem:fullcap}
Let $p_1, p_2, \ldots, p_k$ be all maximally empty pixels of a quadtree
configuration $T$. Then we have $\mathrm{cap}(T) = \sum_{i=1}^k \mathrm{cap}(p_i)$.
\end{lem}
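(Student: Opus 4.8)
The plan is to prove a stronger, local statement about the nodes of the quadtree and then read off the lemma by specialising it to the root. For a pixel $p$ write $M(p)$ for the set of maximally empty pixels lying in the subtree rooted at $p$, and call $p$ \emph{exposed} if no proper ancestor of $p$ is maximally empty. I would prove, by induction along the quadtree, that $\mathrm{cap}(p)=\sum_{q\in M(p)}\mathrm{cap}(q)$ holds for every exposed pixel $p$. The root is exposed, and $M(\mathrm{root})$ is exactly $\{p_1,\dots,p_k\}$, so the instance at the root is the lemma. The only boundary case is an empty root: under the natural convention that an empty root counts as maximally empty, one then has $k=1$, $p_1=\mathrm{root}$, and both sides equal $4^{0}=1$, so from now on assume the root is not empty.

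First I would record three monotonicity facts, each immediate from the definitions of blocked/empty/fractional: \emph{(i)} every descendant of an occupied or blocked pixel is blocked, hence not maximally empty; \emph{(ii)} every descendant of an empty pixel is empty, so no \emph{proper} descendant of an empty pixel is maximally empty (and, on any root-to-node path, the empty pixels form an initial segment); \emph{(iii)} no child of a fractional pixel is blocked, so each such child is occupied, fractional, or --- if it contains no square --- empty with a fractional, hence non-empty, parent, i.e.\ maximally empty. Fact (iii) is the linchpin: it makes the case split in the recursive definition of $\mathrm{cap}$ on the children of a fractional pixel coincide with exactly the kinds of pixel for which the local identity is trivial or available by induction.

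The induction on $p$ follows the definition of $\mathrm{cap}(p)$. If $p$ is occupied or blocked, then $\mathrm{cap}(p)=0$ and $M(p)=\emptyset$ by (i), so both sides vanish. If $p$ is empty, then $p$ is not the root and has a parent; letting $a$ be the highest empty ancestor of $p$, fact (ii) shows all pixels from $a$ down to $p$ are empty, while $a$'s parent is not empty, so $a$ is maximally empty, and exposedness of $p$ forces $a=p$; hence $p$ is maximally empty, $M(p)=\{p\}$ by (ii), and both sides equal $4^{-j}$ for $p$ a $j$-pixel. If $p$ is fractional, then $\mathrm{cap}(p)=\sum_{p'\in C(p)}\mathrm{cap}(p')$ by definition; by (iii) every child $p'$ is occupied, blocked, maximally empty, or fractional, and is again exposed (its only proper ancestor not shared with $p$ is $p$ itself, which is not maximally empty), so the already-treated cases or the induction hypothesis give $\mathrm{cap}(p')=\sum_{q\in M(p')}\mathrm{cap}(q)$ for every child. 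Since the children's subtrees are pairwise disjoint and cover the subtree at $p$ apart from $p$, and $p\notin M(p)$ because $p$ is fractional, we get $M(p)=\bigcup_{p'\in C(p)}M(p')$ as a disjoint union (cf.\ Observation~\ref{obs:disjoint}); summing the per-child identities gives $\mathrm{cap}(p)=\sum_{q\in M(p)}\mathrm{cap}(q)$.

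The main obstacle is conceptual, not computational: the per-pixel identity is \emph{false} for an empty pixel that is not maximally empty, where the left side is $4^{-j}$ but the right side is $0$ (its subtree contains no maximally empty pixel, by (ii)). The induction must therefore be arranged so as never to assert the identity at such a pixel, and the ``exposed'' hypothesis does exactly that: an exposed empty pixel is automatically maximally empty, and by fact (iii) exposedness propagates to the children in the only recursive case. Once this is in place, the remainder is a routine bottom-up unwinding of the recursive definition of capacity.
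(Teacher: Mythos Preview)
Your proof is correct and follows essentially the same idea as the paper's: the recursive definition of $\mathrm{cap}$ only ever picks up positive contributions at the maximally empty pixels. The paper states this in one sentence (``the only positive capacities considered for $\mathrm{cap}(T)$ are exactly those of the maximally empty pixels''), whereas you carry it out as a careful structural induction, introducing the ``exposed'' hypothesis precisely to sidestep the one pitfall---that the local identity fails at an empty but non-maximally-empty pixel---which the paper's terse argument leaves implicit.
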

\begin{proof}
The claim follows directly from the definition of the capacity, as the only
positive capacities considered for $\mathrm{cap}(T)$ are exactly those
of the maximally empty pixels. \qed
\end{proof}

\begin{figure}[tbh]
\centering
\includegraphics[width=0.45\textwidth]{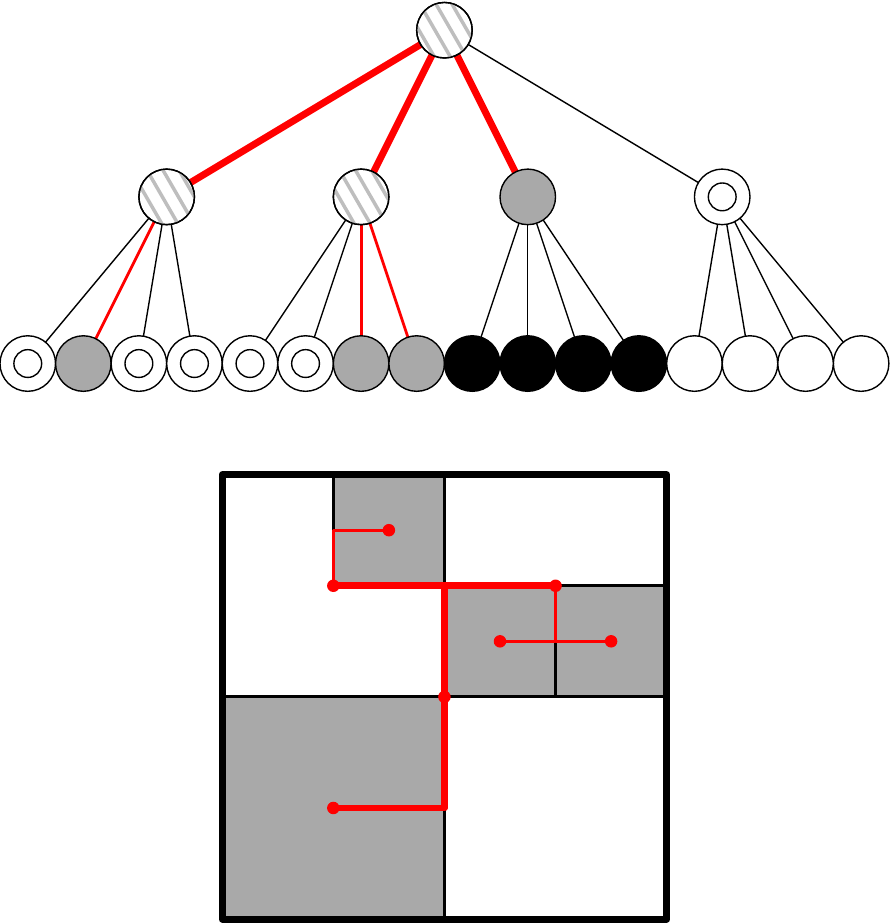}
\caption{A quadtree configuration {\color{black}(above)} and the corresponding dynamically generated quadtree layout {\color{black}(below)}.
Gray nodes are occupied, white ones with gray stripes
fractional, black ones blocked, and white nodes without stripes
empty. Maximally empty nodes have a circle inscribed. Red lines in the module layout
indicate the dynamically produced communication infrastructure, induced by the quadtree structure.}
\label{fig:config}
\end{figure}

See Figure~\ref{fig:config} for an example of a quadtree configuration
and the corresponding packing of aligned squares in the unit square.
	
	Quadtree configurations are transformed using \emph{moves}
(\emph{reallocations}). A $j$-square $s$ assigned to a $j$-pixel
$p$ can be \emph{moved} (\emph{reallocated}) to another $j$-pixel $q$
by creating a new assignment from $q$ to $s$ and deleting the old
assignment from $p$ to $s$. $q$ must have been empty for this to be
allowed.

We allow only one move at a time. For example, two squares cannot
change places unless there is a sufficiently large pixel to
temporarily store one of them. Furthermore, we do not put limitations
on how to transfer a square from one place to another, i.e., we can
always move a square even if there is no collision-free path between
the origin and the destination.

\begin{defi}
A fractional pixel is \emph{open} if at least one of its children is
(maximally) empty. A configuration is called \emph{compact} if there
is at most one open $j$-pixel for every $j \in \Nz$.
\end{defi}

In (one-dimensional) storage allocation and scheduling, there are
techniques that avoid reallocations by requiring more space than the
sum of the sizes of the allocated
pieces. See Bender et al.~\cite{bender_reallocation_2014} for an
example. From there we adopt the term \emph{underallocation}. In particular, given two squares $s_1$ and $s_2$, $s_2$ is an $x$-underallocated copy
of $s_1$, if $|s_2| = x \cdot |s_1|$ for $x > 1$.

\begin{defi}
A \emph{request} has one of the forms \textsc{Insert($x$)} or
\textsc{Delete($x$)}, where $x$ is a unique identifier for a
square. Let $v \in [0, 1]$ be the volume of the square $x$. The
\emph{volume} of a request $\sigma$ is defined as
\[
\mathrm{vol}(\sigma) = \left\{ \begin{array}{ccl}
 v & \text{if} & r=\textsc{Insert($x$)},\\
-v & \text{if} & r=\textsc{Delete($x$)}.
\end{array} \right.
\]
\end{defi}

\begin{defi}
A sequence of requests $\sigma_1, \sigma_2, \ldots, \sigma_k$
is \emph{valid} if $\sum_{i=1}^j \mathrm{vol}(\sigma_i) \le 1$ holds
for every $j=1,2,\ldots,k$. It is called \emph{aligned}, if
$|\mathrm{vol}(\sigma_j)| = 4^{-\ell_j}, \ell_j \in \Nz,$
where $|.|$ denotes the absolute value, holds for every
$j=1,2,\ldots,k$, i.e., if only aligned squares are packed.
\end{defi}

Our goal is to minimize the costs of reallocations. Costs can be
measured in different ways, for example in the number of moves or the
reallocated volume.

\begin{defi}
Assume we fulfill a request $\sigma$ and as a consequence reallocate
a set of squares $\{s_1, s_2, \ldots, s_k\}$. The \emph{movement
cost} of $\sigma$ is defined as $c_{\mathrm{move}}(\sigma) = k$,
the \emph{total volume cost} of $\sigma$ is defined as
$c_{\mathrm{total}}(\sigma) = \sum_{i=1}^k |s_i|$,
and the \emph{(relative) volume cost} of $\sigma$ is defined as
$c_{\mathrm{vol}}(\sigma) = \frac{c_{\mathrm{total}}(\sigma)}{|\mathrm{vol}(\sigma)|}$.
\end{defi}

\section{Inserting into a Given Configuration}

In this section we examine the problem of rearranging a given
configuration in such a way that the insertion of a new square is
possible. {\color{black}
Before we present our results in mathematical detail, including all
necessary proofs, we give a short overview of the individual
propositions and their significance: We first examine properties of
quadtree configurations culminating in Theorem~\ref{thm:qtmoves}, which
establishes that any configuration with sufficient capacity allows the
insertion of a square. Creating the required contiguous space for the insertion
comes at a cost due to required reallocations. This cost is analysed
in detail in Subsection~\ref{sec:costs}. There, we present matching
upper and lower bounds on the reallocation cost for our three cost
functions -- total volume cost (Theorems~\ref{thm:volumebound} and
\ref{thm:volumexample}), (relative) volume cost
(Corollary~\ref{cor:relvoltight}), and movement cost
(Theorems~\ref{thm:movbound} and \ref{thm:movexample}).
}

\subsection{Coping with Fragmented Allocations}\label{sec:delete}

Our strategy follows one general idea: larger empty pixels can be
built from smaller ones; e.g., four empty $i$-pixels can
be combined into one empty $(i-1)$-pixel. This can be iterated
to build an empty pixel of suitable volume.

\begin{lem}\label{lem:order}
Let $p_1, p_2, \ldots, p_k$ be a sequence of empty pixels sorted by
volume in descending order. Then
$\sum_{i=1}^k \mathrm{cap}(p_i) \ge 4^{-\ell} > \sum_{i=1}^{k-1} \mathrm{cap}(p_i)$
implies the following properties:
\begin{equation}\label{eq:four_p1}
k < 4 \Leftrightarrow k = 1
\end{equation}
\begin{equation}\label{eq:exact}
k \ge 4 \Rightarrow \sum_{i=1}^k \mathrm{cap}(p_i) = 4^{-\ell}
\end{equation}
\begin{equation}\label{eq:four_p2}
k \ge 4 \Rightarrow \mathrm{cap}(p_k) = \mathrm{cap}(p_{k-1}) =
\mathrm{cap}(p_{k-2}) = \mathrm{cap}(p_{k-3})
\end{equation}
\end{lem}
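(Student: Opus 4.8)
The key fact is that all $\mathrm{cap}(p_i)$ are powers of $4$, specifically $\mathrm{cap}(p_i) = 4^{-r_i}$ where $p_i$ is an empty $r_i$-pixel, and the sequence is sorted so that $r_1 \le r_2 \le \cdots \le r_k$. So I would first record this: the hypothesis says that the partial sums $S_j := \sum_{i=1}^j 4^{-r_i}$ first reach or exceed $4^{-\ell}$ exactly at $j = k$, i.e. $S_{k-1} < 4^{-\ell} \le S_k$.

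For \eqref{eq:four_p1}: if $k = 1$ then trivially $k < 4$; conversely, suppose $2 \le k \le 3$. Since $S_{k-1} < 4^{-\ell}$ and $S_{k-1}$ is a nonnegative sum of powers of $4$ each at most... here I need to be a little careful, because the $p_i$ are sorted in descending volume, so actually $r_1 \le \cdots \le r_k$ means the largest pixel comes first. So $4^{-r_1} = \mathrm{cap}(p_1) \ge \mathrm{cap}(p_i)$ for all $i$. Then $S_{k-1} \ge \mathrm{cap}(p_1) = 4^{-r_1}$, and $S_{k-1} < 4^{-\ell}$ forces $r_1 > \ell$, i.e. $r_1 \ge \ell+1$, so every $\mathrm{cap}(p_i) \le 4^{-(\ell+1)}$. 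Then $S_k \le k \cdot 4^{-(\ell+1)} \le 3 \cdot 4^{-(\ell+1)} < 4^{-\ell}$, contradicting $S_k \ge 4^{-\ell}$. Hence $k \ge 4$ whenever $k \ge 2$, which gives the equivalence.

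For \eqref{eq:exact} and \eqref{eq:four_p2}, assume $k \ge 4$. By the argument above, $r_1 \ge \ell + 1$, so every term $\mathrm{cap}(p_i)$ is an integer multiple of $4^{-m}$ where $m := r_k$ is the largest exponent (smallest pixel); in particular all terms, and $4^{-\ell}$ itself, lie in $4^{-m}\mathbb{Z}$. Now I want to run a carrying/base-$4$ argument on the tail. The cleanest route: look at the smallest pixels, those with exponent $m = r_k$. Because $S_{k-1} < 4^{-\ell} \le S_k$, the single term $\mathrm{cap}(p_k) = 4^{-m}$ is what pushes the sum over the threshold, and the gap $4^{-\ell} - S_{k-1}$ is a positive multiple of $4^{-m}$ that is $\le 4^{-m}$, hence exactly $4^{-m}$; this already gives $S_k = 4^{-\ell}$, proving \eqref{eq:exact}. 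For \eqref{eq:four_p2}, suppose for contradiction that $p_{k-3}$ is strictly larger than $p_k$, i.e. $r_{k-3} < m$; then $p_{k-3}, \ldots$ all have exponent $< m$ as well by sorting, so $p_{k-2}, p_{k-1}, p_k$ are the only pixels (among $p_{k-3},\dots,p_k$) that could have exponent $m$ — wait, $p_{k-2}, p_{k-1}$ could still be larger too. The real claim is: the number of pixels of the smallest size $4^{-m}$ is a multiple of... no. Let me restate: among $p_1,\dots,p_k$, consider those with the maximal exponent $m$; say there are $t \ge 1$ of them, namely $p_{k-t+1},\dots,p_k$. Their total contribution is $t \cdot 4^{-m}$. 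If $t$ were not a multiple of $4$, then $S_k = S_{k-t} + t\cdot 4^{-m}$ would not be a multiple of $4^{-(m-1)}$, yet $S_k = 4^{-\ell}$ with $\ell \le m-1$ is a multiple of $4^{-(m-1)}$ — but $S_{k-t}$ is also a multiple of $4^{-(m-1)}$ since every one of $p_1,\dots,p_{k-t}$ has exponent $\le m-1$. Contradiction. So $t$ is a positive multiple of $4$, hence $t \ge 4$, which means $p_{k-3},p_{k-2},p_{k-1},p_k$ all have exponent $m$, giving \eqref{eq:four_p2}.

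The main obstacle is getting the divisibility bookkeeping exactly right — in particular being careful that "sorted by volume in descending order" means the exponents are \emph{non-decreasing}, so the minimal-volume pixels sit at the end of the list, and then phrasing the carry argument as: "$S_k$ is a multiple of $4^{-(m-1)}$ while the contribution of the smallest pixels alone is $t\cdot 4^{-m}$, forcing $4 \mid t$." Once that observation is isolated, \eqref{eq:exact} and \eqref{eq:four_p2} both drop out, and \eqref{eq:four_p1} is the easy counting bound $3\cdot 4^{-(\ell+1)} < 4^{-\ell}$.
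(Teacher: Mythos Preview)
Your proof is correct. Part~\eqref{eq:four_p1} matches the paper's argument exactly: once $k\ge 2$, the inequality $S_{k-1}<4^{-\ell}$ forces $r_1\ge\ell+1$, and then three terms of size at most $4^{-(\ell+1)}$ cannot reach $4^{-\ell}$.

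For \eqref{eq:exact} and \eqref{eq:four_p2} your route differs from the paper's in presentation, though the underlying idea is the same base-$4$ arithmetic. The paper argues by tracking the rightmost nonzero base-$4$ digit of $S_{k-1}/4^{-\ell}$ and reasoning about which digit the final term $\mathrm{cap}(p_k)$ increments and what carries must occur. You instead phrase both conclusions as divisibility statements: since every $\mathrm{cap}(p_i)$ and $4^{-\ell}$ lie in $4^{-m}\mathbb{Z}$ (with $m=r_k$), the gap $4^{-\ell}-S_{k-1}$ is a positive multiple of $4^{-m}$ bounded by $4^{-m}$, hence equals $4^{-m}$, giving \eqref{eq:exact}; and then $S_k=4^{-\ell}$ and $S_{k-t}$ both lie in $4^{-(m-1)}\mathbb{Z}$, forcing $4\mid t$ and hence $t\ge 4$, giving \eqref{eq:four_p2}. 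Your divisibility formulation is slightly cleaner and avoids the somewhat informal step in the paper's proof where one must argue that the last four additions each hit the same digit position despite potential carries; the cost is that you must explicitly observe $\ell<m$ (which you do, via $r_1\ge\ell+1$). Either way the arguments are short and equivalent in spirit.
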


\begin{proof}
For $k \ge 2$, $p_1$ must be a pixel of smaller capacity than an
$\ell$-pixel, because otherwise we would not need $p_2$ for the sum to
be greater than $4^{-\ell}$ -- in contradiction to the
assumption. Thus, we need to add up smaller capacities to at least
$4^{-\ell}$. As we need at least four $(\ell+1)$-pixels for that,
statement~\eqref{eq:four_p1} holds.

In the following we assume $k \ge 4$. Let $x=\sum_{i=1}^{k-1}
\mathrm{cap}(p_i)$. We know from the assumption that $x$ is strictly
less than $4^{-\ell}$, but $x+\mathrm{cap}(p_k)$ is at least
$4^{-\ell}$. Consider the base-4 (quaternary) representation
of $x/4^{-\ell}$: $x_4=(x/4^{-\ell})_4$. It has a zero before the
decimal point and a sequence of base-4 digits after. Let $n$ be the
rightmost non-zero digit of $x_4$. As the sequence is sorted in
descending order and the capacities are all negative powers of four,
adding the capacity of $p_k$ can only increase $n$, or a digit right
of $n$, by one. Since all digits right of $n$ are zero, increasing one
of them by one does not increase $x$ to at least
$4^{-\ell}$. Therefore, it must increase $n$. But if increasing $n$ by
one means increasing $x$ to at least $4^{-\ell}$, then every digit of
$x_4$ after the decimal point and up to $n$ must have been
three. Consequently, increasing $n$ by one leads not only to
$x + \mathrm{cap}(p_k) \ge 4^{-\ell}$ but also to
$x + \mathrm{cap}(p_k)=4^{-\ell}$, which is statement~\eqref{eq:exact}.

Furthermore, as $n$ must have been three and the sequence is sorted,
the previous three capacities added must have each increased $n$ by
exactly one as well. This proves statement~\eqref{eq:four_p2}. \qed
\end{proof}

\begin{figure}[t!hp]
\centering
\includegraphics[width=0.25\textwidth]{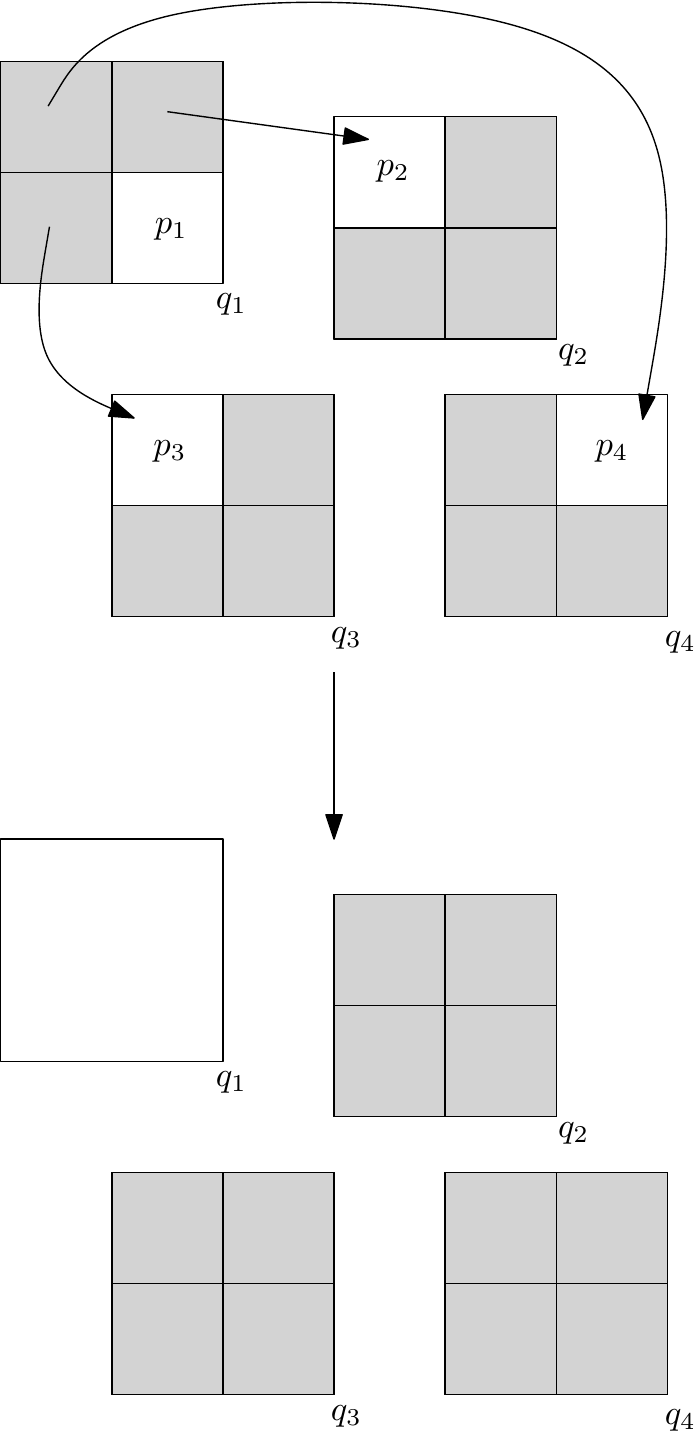}
\caption{Illustration to Lemma~\ref{lem:four}.}
\label{fig:four_pixels}
\end{figure}

\begin{lem}\label{lem:four}
Given a quadtree configuration $T$ with four maximally empty
$j$-pixels. Then $T$ can be transformed (using a sequence
of moves) into a configuration $T^*$ with one more maximally
empty $(j-1)$-pixel and four fewer maximally empty $j$-pixels than $T$
while retaining all its maximally empty $i$-pixels for $i < j-1$.
\end{lem}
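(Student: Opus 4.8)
The plan is to exploit the geometric structure of the quadtree directly. We are given four maximally empty $j$-pixels, call them $p_1,\dots,p_4$. By Observation~\ref{obs:disjoint} they lie on pairwise non-nested paths, so they are genuinely distinct subtrees. The key case distinction is whether these four pixels happen to be the four children of a common $(j-1)$-pixel $p$. If they are, we are essentially done: by definition of capacity, $p$ itself is then empty (all four children empty, and $p$ cannot be blocked since its children are free), so $p$ is already an empty $(j-1)$-pixel; it is maximally empty because its parent contains the four occupied-free... — more precisely, because if $p$'s parent were also empty, the $p_i$ would not have been maximally empty. In this subcase $T^* = T$ works, with the four $j$-pixels no longer counted as maximally empty (their maximal-emptiness is absorbed by $p$) and one new maximally empty $(j-1)$-pixel, and nothing at layers $i<j-1$ is touched.

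The substantive case is when the four pixels are \emph{not} siblings. Then I would pick one of them whose parent $p$ (a $(j-1)$-pixel) is, among the four parents, a good target — e.g.\ choose $p_1$ and let $p$ be its parent. I would argue that $p$ has at most three non-empty children (it has the empty child $p_1$, and if it had four empty children we would be in the sibling/absorption situation at $p$ already). Now move, one square at a time, the contents that obstruct the other three children of $p$ into the other maximally empty $j$-pixels $p_2,p_3,p_4$. The point is a counting/volume argument: each child of $p$ other than $p_1$ has capacity deficit at most $4^{-j}$, i.e.\ its "missing" volume to become empty is at most that of a single $j$-square's worth of stuff, but in fact each such child, not being empty, contains occupied squares whose total volume is at most $4^{-j}$; a subtree of total square-volume $\le 4^{-j}$ within a $j$-pixel's sibling can be relocated into one empty $j$-pixel by recursively applying the same combining idea — or, more cleanly, I would invoke an auxiliary claim (provable by induction on $j$, or as a consequence of the forthcoming Theorem~\ref{thm:qtmoves}) that any free $j$-pixel's worth of squares can be evacuated into any single empty $j$-pixel. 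After evacuating the three siblings of $p_1$, all four children of $p$ are empty, hence $p$ is empty; then as before $p$ is maximally empty, the four original $j$-pixels are absorbed, and — crucially — the three donor pixels $p_2,p_3,p_4$ may have changed internal structure but remain maximally empty $j$-pixels (they received squares but we can instead route the evacuated material so that it lands inside $p_2$'s subtree without occupying $p_2$ itself)... this needs care.

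The main obstacle, and the place I expect the real work to be, is making the relocation of the three obstructing sibling-subtrees precise while \emph{preserving} the count of maximally empty $i$-pixels for $i<j-1$ and not accidentally destroying the three donor maximally empty $j$-pixels. The clean way is probably: do not use $p_2,p_3,p_4$ as donors at all; instead observe that the three non-empty siblings of $p_1$ together contain total square-volume at most $3\cdot 4^{-j} < 4^{-(j-1)}$, actually one should be more careful — their contents plus the hole $p_1$ sum to the capacity structure of $p$ — and relocate their squares \emph{into each other} so as to free up three of the four children of $p$, i.e.\ compact within $p$'s own subtree. Since within a single $(j-1)$-subtree any compaction of aligned squares can be realized by moves (the pixels are all $\le j$, each such square is tiny relative to $4^{-(j-1)}$, and single-move reallocation suffices because one can always peel off the smallest square to a freshly emptied pixel), this keeps everything outside $p$'s subtree untouched, so all maximally empty $i$-pixels with $i<j-1$ that lay outside are trivially retained, and any that lay \emph{inside} $p$'s subtree would have contradicted $p_1$ being a child-level pixel of $p$ with its siblings non-empty... — one checks there are none. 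I would isolate this "compaction within one subtree is realizable by single moves, evacuating all but one child" as the load-bearing sub-lemma and prove it by induction on the subtree height using Lemma~\ref{lem:order} to guarantee that the freed capacity actually consolidates into whole empty child-pixels.
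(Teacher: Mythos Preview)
Your first substantive idea---evacuate the three non-empty siblings of $p_1$ into the empty pixels $p_2,p_3,p_4$---\emph{is} the paper's proof, and it already works. Where you go wrong is in the sentence ``crucially --- the three donor pixels $p_2,p_3,p_4$ \ldots\ remain maximally empty $j$-pixels''. Reread the lemma: $T^*$ is supposed to have \emph{four fewer} maximally empty $j$-pixels. The whole point is that $p_2,p_3,p_4$ get filled (and $p_1$ loses maximal emptiness because its parent $q_1$ becomes empty). That is exactly the $-4$ at layer $j$ and the $+1$ at layer $j-1$; nothing outside the subtrees rooted at $q_1$ and at $p_2,p_3,p_4$ is touched, so all maximally empty $i$-pixels with $i<j-1$ survive trivially. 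Once you stop trying to protect $p_2,p_3,p_4$, the proof is three lines and no induction, no sub-lemma, no appeal to Theorem~\ref{thm:qtmoves}.

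Two further remarks. First, your opening case ``all four $p_i$ are siblings'' is vacuous: if all four children of a $(j-1)$-pixel are empty then that pixel is empty, so none of the children can be \emph{maximally} empty. Second, your fallback plan---compact entirely within $p$'s own subtree, without using $p_2,p_3,p_4$---cannot work in general: the three non-empty siblings of $p_1$ may each be occupied by a full $j$-square, giving total content $3\cdot 4^{-j}$ against only $4^{-j}$ of free capacity (namely $p_1$) inside $p$. There is simply not enough room to make $p$ empty by internal rearrangement; you must export material to $p_2,p_3,p_4$.
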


\begin{proof}
Let $p_1, p_2, p_3$ and $p_4$ be four maximally empty $j$-pixels and
$q_1, q_2, q_3$ and $q_4$ be the parents of $p_1, p_2, p_3$ and $p_4$,
respectively. Then $q_i$ has at most three children that are not
empty. Now, we can move the at most three non-empty subtrees from one
of the $q_i$ to the others, $i=1,2,3,4$. Without loss of generality,
we choose $q_1$. Let $a, b$ and $c$ be the children of $q_1$ that are
not $p_1$. We move $a$ to $p_2$, $b$ to $p_3$ and $c$ to $p_4$. See
Figure~\ref{fig:four_pixels} for an illustration. Thus, we get a new
configuration $T^*$ with the empty $(j-1)$-pixel $q_1$ and occupied or
fractional pixels {\color{black}$q_2$, $q_3$, $q_4$}. Note that $p_1$ is still empty,
but no longer maximally empty, because its parent $q_1$ is now empty.
The construction does not affect any other maximally empty pixels. \qed
\end{proof}

\begin{thm}\label{thm:qtmoves}
Given a quadtree configuration $T$ with a remaining capacity of at least
$4^{-j}$, you can transform $T$ into a quadtree configuration $T^*$
with an empty $j$-pixel using a sequence of moves.
\end{thm}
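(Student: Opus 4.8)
The plan is to reduce the general statement to repeated applications of Lemma~\ref{lem:four}, using Lemma~\ref{lem:order} to guarantee that the maximally empty pixels can always be grouped into quadruples of equal capacity until an empty $j$-pixel appears. First I would invoke Lemma~\ref{lem:fullcap} to write $\mathrm{cap}(T) = \sum_{i=1}^{k} \mathrm{cap}(p_i)$, where $p_1,\dots,p_k$ are the maximally empty pixels of $T$ sorted by volume in descending order. Since $\mathrm{cap}(T) \ge 4^{-j}$, either some $p_i$ is already an empty $i$-pixel with $i \le j$ (in which case an empty $j$-pixel is obtained trivially, by moving squares out of one of its $j$-descendant subtrees — or the pixel itself already works if $i = j$), or all maximally empty pixels are strictly smaller than $4^{-j}$ and we must build a larger one.

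In the latter case, I would set up an induction that drives up the size of the smallest maximally empty pixels. Pick the smallest volume $4^{-m}$ occurring among the $p_i$ with $m > j$. I claim there are at least four maximally empty $m$-pixels: apply Lemma~\ref{lem:order} to the shortest prefix $p_1,\dots,p_{k'}$ whose capacities sum to at least $4^{-j}$; if this prefix has length $1$ we are in the trivial case above, so it has length $\ge 4$, and \eqref{eq:four_p2} tells us its four smallest members have equal capacity. A short argument shows these four smallest members are exactly (four of the) maximally empty $m$-pixels, since $m$ was chosen minimal. Now apply Lemma~\ref{lem:four} to four maximally empty $m$-pixels: this yields a configuration with one more maximally empty $(m-1)$-pixel, four fewer maximally empty $m$-pixels, and all maximally empty $i$-pixels for $i < m-1$ retained — crucially, Lemma~\ref{lem:four} performs only moves, so $\mathrm{cap}(T)$ is unchanged and remains $\ge 4^{-j}$. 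Repeating, the multiset of capacities of maximally empty pixels evolves while its sum stays fixed, and each step strictly decreases (in lexicographic order on the sorted volume sequence, or by a potential like $\sum_i (\text{layer of }p_i)$) the "fragmentation" of the smallest pieces, so after finitely many steps some maximally empty pixel of size $\ge 4^{-j}$ must appear; from it an empty $j$-pixel is extracted as in the trivial case.

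The main obstacle I anticipate is the bookkeeping needed to certify termination and, in particular, to verify that at every stage there really are four maximally empty pixels of the current minimum size $m > j$ available to feed into Lemma~\ref{lem:four}. This is exactly where Lemma~\ref{lem:order}\eqref{eq:four_p1} and \eqref{eq:four_p2} do the heavy lifting: \eqref{eq:four_p1} rules out the degenerate possibility of the relevant prefix having length $2$ or $3$, and \eqref{eq:four_p2} forces the four smallest capacities in the prefix to coincide, which (combined with minimality of $m$) is what lets us legitimately apply Lemma~\ref{lem:four}. One subtlety to handle carefully is that after a merge the new $(m-1)$-pixel $q_1$ may combine with pre-existing maximally empty $(m-1)$-pixels to again form a quadruple, so the right formulation of the induction hypothesis is on the total capacity together with a monovariant measuring how "spread out" over small layers the capacity is; I would phrase it so that each application of Lemma~\ref{lem:four} strictly decreases the monovariant, guaranteeing the process halts with the desired empty $j$-pixel.
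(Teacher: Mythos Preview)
Your overall architecture is exactly the paper's: sort the maximally empty pixels, take a shortest prefix with capacity $\ge 4^{-j}$, use Lemma~\ref{lem:order} to find four equal-capacity pixels at the tail of the prefix, merge them with Lemma~\ref{lem:four}, and iterate. However, one step in your plan fails as written. You choose $m$ to be the layer of the \emph{globally} smallest maximally empty pixel and then assert that the four smallest members of the shortest prefix lie in layer $m$. That is not true: the shortest prefix need not reach down to layer $m$ at all. For a concrete counterexample take $j=1$ and a configuration with four maximally empty $2$-pixels and a single maximally empty $3$-pixel (total capacity $17/64 > 4^{-1}$). The shortest prefix summing to $\ge 1/4$ is exactly the four $2$-pixels, so the quadruple produced by \eqref{eq:four_p2} sits in layer~$2$, while your $m$ equals $3$ and there is only \emph{one} maximally empty $3$-pixel---Lemma~\ref{lem:four} cannot be applied there.

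The fix is simply not to insist on the global minimum: let $\ell$ be whatever common layer \eqref{eq:four_p2} hands you at the tail of the current prefix, merge there, and re-form the prefix. This is precisely what the paper does, and it also makes your elaborate monovariant unnecessary: by \eqref{eq:exact} the prefix sum is \emph{exactly} $4^{-j}$, each merge replaces four pixels by one of the same total capacity, so the prefix length drops by three every round and termination is immediate. (A small aside: in your ``trivial'' case where some $p_i$ lies in a layer $\le j$, no moves are needed at all---an empty pixel has only empty descendants, so it already contains an empty $j$-pixel.)
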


\begin{proof}
Let $S=p_1, p_2, \ldots, p_n$ be the sequence containing all maximally
empty pixels of $T$ sorted by capacity in descending order. If the
capacity of $p_1$ is at least $4^{-j}$, then there already is an empty
$j$-pixel in $T$ and we can simply set $T^* = T$.

Assume $\mathrm{cap}(p_1) < 4^{-j}$. In this case we inductively build an empty
$j$-pixel. Let $S'=p_1, p_2, \ldots, p_k$ be the shortest prefix of
$S$ satisfying $\sum_{i=1}^{k} \mathrm{cap}(p_i) \ge 4^{-j}$.
Such a prefix has to exist because of {\color{black}Lemma~\ref{lem:fullcap}}.
Note that due to Observation~\ref{obs:disjoint} no pixel
$p_i$ is contained in another pixel $p_j$, $i, j \in
\{1,2,\ldots,k\}$, $i \ne j$.
Lemma~\ref{lem:order} tells us $k \ge 4$ and the
last four pixels in $S'$, $p_{k-3}, p_{k-2}, p_{k-1}$ and
$p_k$, are from the same layer, say layer $\ell$. Thus, we can apply
Lemma~\ref{lem:four} to $p_{k-3}, p_{k-2}, p_{k-1}, p_k$ to get a new
maximally empty $(\ell - 1)$-pixel $q$. We remove $p_{k-3}, p_{k-2},
p_{k-1}, p_k$ from $S'$ and insert $q$ into $S'$ according to its
capacity. 
The length of the resulting sequence $S''$ is three less than
the length of $S'$. This does not change the sum of the capacities, since
an empty $(\ell-1)$-pixel has the same capacity as four empty
$\ell$-pixels. That is, $\sum_{p \in S'} \mathrm{cap}(p) = \sum_{p \in
 S''} \mathrm{cap}(p)$ holds.

We can repeat these steps until $k < 4$ holds. Then
Lemma~\ref{lem:order} implies that $k=1$, i.e., the sequence contains
only one pixel $p_1$, and because $\mathrm{cap}(p_1)=4^{-j}$, $p_1$ is
an empty $j$-pixel. \qed
\end{proof}

\subsection{Reallocation Cost}\label{sec:costs}

Reallocation cost is made non-trivial by \emph{cascading moves}:
Reallocated squares may cause further reallocations, when there is no
empty pixel of the required size available.

\begin{obs}\label{obs:largebad}
In the worst case, reallocating an $\ell$-square is not cheaper than
reallocating four $(\ell+1)$-squares -- using any of the three defined
cost types.
\end{obs}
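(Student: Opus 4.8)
The plan is to exhibit, for each of the three cost types, a family of configurations in which reallocating a single $\ell$-square triggers a cascade whose cost is at least that of independently reallocating four $(\ell+1)$-squares. The key observation is that an $\ell$-square occupies the same volume as four $(\ell+1)$-squares, and that the cascading-move mechanism described at the start of Subsection~\ref{sec:costs} propagates the same way regardless of whether the displaced square has size $2^{-\ell}$ or $2^{-(\ell+1)}$ --- only the base volume changes. So the natural approach is: first handle the volume-based costs ($c_{\mathrm{total}}$ and $c_{\mathrm{vol}}$) by a direct volume-counting argument, then handle movement cost $c_{\mathrm{move}}$ by noting that an $\ell$-square can be forced to displace at least as many squares as the combined cascades of four $(\ell+1)$-squares.

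First I would set up the worst-case instance. Consider a configuration in which the target region for an insertion is an $(\ell-1)$-pixel whose four child $\ell$-pixels are each occupied, and each of those in turn sits above a densely packed subtree so that evicting its square forces a chain of further evictions down to some fixed depth $d$. To reallocate the $\ell$-square in the first child, Theorem~\ref{thm:qtmoves} guarantees space exists somewhere, but in the worst case that space is assembled only by a long cascade. Now compare: reallocating four $(\ell+1)$-squares, each embedded in the analogous dense subtree one layer deeper, produces four cascades of the same combinatorial shape. For $c_{\mathrm{total}}$, the total reallocated volume in the $\ell$-square cascade is, level by level, exactly the sum of the four $(\ell+1)$-square cascades plus the volume of the $\ell$-square itself ($4^{-\ell} = 4\cdot 4^{-(\ell+1)}$), so it is at least as large. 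For $c_{\mathrm{vol}}$, the request volume $|\mathrm{vol}(\sigma)|$ in the denominator is $4^{-\ell}$ in one case and $4^{-(\ell+1)}$ in each of the four others; dividing through, the relative cost of the single big reallocation dominates the worst of the four small ones, which suffices since ``reallocating four $(\ell+1)$-squares'' is charged the worst (or summed) relative cost.

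For $c_{\mathrm{move}}$ I would argue that the number of squares a cascade touches depends only on the shape of the occupied subtree below the evicted pixel, not on its layer, so the four $(\ell+1)$-cascades can be made to have exactly the combinatorial structure of the single $\ell$-cascade's four sub-branches; hence the move count for the $\ell$-square is the sum of the four move counts plus one (for the $\ell$-square itself), and in particular is not smaller. The main obstacle I anticipate is making the comparison ``apples to apples'': one must be careful that the four independent $(\ell+1)$-reallocations are each performed in a configuration with just enough capacity to force the full cascade (otherwise Theorem~\ref{thm:qtmoves} might route around it cheaply), and that no sharing or cancellation between the four cascades is available that would not also be available inside the single $\ell$-cascade. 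Once the instances are chosen so that each sub-cascade is ``isolated'' --- its freed space cannot help any other sub-cascade --- the level-by-level accounting goes through uniformly for all three cost measures, which is exactly what the observation claims.
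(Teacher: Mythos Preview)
Your approach is considerably more elaborate than needed and, more critically, argues in the wrong logical direction. The observation compares two \emph{worst-case} quantities, so what must be shown is that in \emph{every} configuration the four $(\ell{+}1)$-squares can be reallocated at cost no greater than that of the single $\ell$-square. Exhibiting one family of configurations in which the $\ell$-square is expensive establishes only a lower bound on one side of the inequality; it does not rule out some other configuration in which the four small squares are even more expensive. You notice the danger yourself (``one must be careful that the four independent $(\ell{+}1)$-reallocations are each performed in a configuration with just enough capacity to force the full cascade''), but resolving it means identifying the worst case for the small squares in advance---and that is circular, since Observation~\ref{obs:largebad} is precisely what the paper invokes in Theorems~\ref{thm:volumebound} and~\ref{thm:movbound} to justify that worst-case structure.

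The paper instead gives a two-line reduction in the opposite direction. For the volume-based costs: whatever empty $\ell$-pixel the $\ell$-square ends up in (after any cascade), that same pixel is also four empty $(\ell{+}1)$-pixels, so the four small squares can go there with identical cascading and identical moved volume $4^{-\ell}=4\cdot 4^{-(\ell+1)}$. Hence in \emph{any} configuration the four small squares cost no more, and the worst-case inequality is immediate. For movement cost: the single move of the $\ell$-square looks cheaper than four moves, but producing its empty $\ell$-pixel in the worst case forces three $(\ell{+}1)$-siblings to be relocated (Lemma~\ref{lem:four}), each itself an $(\ell{+}1)$-reallocation with its own cascade, so the $1$-versus-$4$ deficit is recovered. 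No explicit extremal instance is required.
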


\begin{proof}
It is straightforward to see this for volume costs, total or relative:
Wherever you can move one $\ell$-square you can also move four
$(\ell+1)$-squares without causing more cascading moves.

For movement costs a single move of an $\ell$-square is less than four
moves of $(\ell+1)$-squares, but it can cause cascading moves of three
$(\ell+1)$-squares plus the cascading moves caused by the reallocation
of an $(\ell+1)$-square and, therefore, does not cause lower costs in
total. \qed
\end{proof}

\begin{thm}\label{thm:volumebound}
The maximum total volume cost caused by the insertion of an
$i$-square $Q$, $i \in \Nz$, 
into a quadtree configuration $T$ with
$\mathrm{cap}(T) \ge 4^{-i}$ is bounded by
\[
c_\mathrm{total,max} \le \frac{3}{4} \cdot
4^{-i} \cdot \mathrm{min} \{(s-i), i\} \in O(|Q| \cdot h(T))
\]
when the smallest previously inserted square is an $s$-square.
\end{thm}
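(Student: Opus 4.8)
The plan is to bound the cost of the insertion procedure from Theorem~\ref{thm:qtmoves} by carefully accounting for the cascading moves. Recall that to insert an $i$-square we need an empty $i$-pixel; to create it, we repeatedly apply Lemma~\ref{lem:four}, each application moving at most three non-empty subtrees out of one of four $j$-pixels' parents into the other three. A subtree rooted at a $j$-pixel has volume at most $4^{-j}$, so one application of Lemma~\ref{lem:four} at layer $j$ costs at most $3 \cdot 4^{-j}$ in total volume. The key structural fact is that the layers at which we apply Lemma~\ref{lem:four} are strictly increasing as we move ``up'' toward layer $i$ (we combine $j$-pixels into a $(j-1)$-pixel, then four $(j-1)$-pixels into a $(j-2)$-pixel, and so on), and at each layer $j$ with $i < j$ we apply the lemma at most once along the critical chain. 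Hence the total volume cost telescopes: $c_\mathrm{total} \le \sum_{j>i} 3 \cdot 4^{-j} = 3 \cdot 4^{-i} \cdot \sum_{m\ge 1} 4^{-m} = 4^{-i}$, which already gives an $O(|Q|)$ bound of the right order but with the wrong constant and without the $\min\{s-i,i\}$ refinement.

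To get the sharper bound $\tfrac34 \cdot 4^{-i} \cdot \min\{s-i,i\}$ I would track two things. First, the factor $\min\{s-i,i\}$: the number of distinct layers below $i$ at which reallocations can actually occur is bounded both by $s-i$ (no square smaller than an $s$-square exists, so there is nothing to move at layers beyond $s$) and by $i$ (there are only $i$ layers strictly between the root-side target layer $i$ and \dots — more precisely, the recursion in Theorem~\ref{thm:qtmoves} walks from some deepest layer up to layer $i$, and the squares being shoved around live in layers $i+1, i+2, \dots$, of which at most $i$ can matter once one also accounts for the volume budget). I would make this dichotomy precise by splitting into the case $s - i \le i$ and $s - i > i$ and bounding the number of ``active'' layers accordingly. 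Second, the constant $\tfrac34$ rather than $1$: this comes from observing that at the single layer where the insertion bottoms out we do not pay the full geometric tail — at the deepest active layer $\ell$ we move at most three $\ell$-squares of volume $4^{-\ell}$ and no deeper cascades occur, so the per-layer cost $3\cdot 4^{-\ell}$ is attained only once and the sum over the (at most $\min\{s-i,i\}$) active layers is dominated termwise by $3\cdot 4^{-\ell}$ for the coarsest active layer. Combining, each of the at most $\min\{s-i,i\}$ applications contributes at most $3 \cdot 4^{-i-1} = \tfrac34 \cdot 4^{-i} \cdot \tfrac14 \cdot 4$ — one has to be careful here, but the point is that a clean bookkeeping of ``three subtrees per layer, volume at most $4^{-i-1}$ at the layer just below $i$, geometrically decreasing below that'' yields the stated $\tfrac34 \cdot 4^{-i}$ per active layer.

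The cleanest route is probably an amortized/charging argument: assign to each reallocated square a charge and show that the squares reallocated when building the empty $i$-pixel can be partitioned by layer, that at most three squares (equivalently, three subtrees of total volume $\le 3\cdot 4^{-j}$) are charged at each layer $j \in \{i+1,\dots\}$, and that the number of layers that receive any charge is at most $\min\{s-i,i\}$. Summing $3 \cdot 4^{-j}$ over these layers and bounding the geometric series by its largest term times a constant gives $c_\mathrm{total,max} \le \tfrac34 \cdot 4^{-i} \cdot \min\{s-i,i\}$; the containment in $O(|Q|\cdot h(T))$ is then immediate from $|Q| = 4^{-i}$ and $\min\{s-i,i\} \le s \le h(T)$.

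The main obstacle I anticipate is pinning down the ``at most one application of Lemma~\ref{lem:four} per layer'' claim together with the exact constant: Lemma~\ref{lem:four} as stated consumes four maximally empty $j$-pixels and produces one $(j-1)$-pixel, but the proof of Theorem~\ref{thm:qtmoves} may invoke it several times at the same layer $\ell$ before moving up (whenever $S'$ has many pixels at layer $\ell$). I would need to argue that, although several applications may happen at layer $\ell$, the squares they move are disjoint (by Observation~\ref{obs:disjoint}) and their total volume is still bounded by the capacity deficit being filled, namely at most $4^{-i}$ at the top and geometrically less below — so the per-layer total volume moved is at most $3\cdot 4^{-j}$ regardless of how many individual applications occur. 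Getting that accounting exactly tight enough to land the constant $\tfrac34$ (rather than $1$) is the delicate part; the $\min\{s-i,i\}$ factor, by contrast, should fall out easily once the layer range of possibly-moved squares is identified.
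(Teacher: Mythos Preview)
Your approach analyzes the wrong procedure, and the accounting contains a genuine error that you partially notice but do not resolve. You try to bound the cost of the merging procedure from Theorem~\ref{thm:qtmoves} (iterated applications of Lemma~\ref{lem:four}); the paper instead analyzes a \emph{recursive eviction} strategy: pick any $i$-pixel with positive remaining capacity, move out its contents, and recursively make room for each evicted square. This leads to a clean recurrence. With $v_i = \sum_{j=i+1}^{s} 3\cdot 4^{-j} = 4^{-i}-4^{-s}$ the worst-case content of the chosen $i$-pixel (three $k$-squares for each $i<k\le s$, via Observation~\ref{obs:largebad}), and $x_i$ the worst-case total volume moved, one gets
\[
x_i = v_i + \sum_{j=i+1}^{s} 3\,x_j,\qquad x_s = 0,
\]
which solves exactly to $x_i = \tfrac{3}{4}\cdot 4^{-i}\cdot(s-i)$. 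The $\min\{s-i,i\}$ then comes from a capacity argument you only hint at: if every $i$-pixel has remaining capacity merely $4^{-s}$, then $4^{i}\cdot 4^{-s}\ge 4^{-i}$ forces $s\le 2i$, hence $s-i\le i$.

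The error in your route is that both of your per-layer bounds are too strong and therefore false. ``At most one application of Lemma~\ref{lem:four} per layer'' gives $\sum_{j>i} 3\cdot 4^{-j}\le 4^{-i}$, and your fallback ``total volume moved at layer $j$ is at most $3\cdot 4^{-j}$'' gives the same $O(|Q|)$ bound. But Theorem~\ref{thm:volumexample} exhibits configurations with cost $\tfrac34\cdot 4^{-i}\cdot(s-i)$, which is $\Theta(|Q|\cdot h(T))$ and exceeds $4^{-i}$ whenever $s-i\ge 2$. So your telescoped bound cannot be right. What actually happens in the Theorem~\ref{thm:qtmoves} procedure on the worst-case instance is that you perform $4^{\,j-i-1}$ applications of Lemma~\ref{lem:four} at layer $j$, each moving three siblings of volume about $4^{-j}$, so the per-layer total is the \emph{constant} $3\cdot 4^{-i-1}=\tfrac34\cdot 4^{-i}\cdot\tfrac14\cdot 4$ across all $s-i$ layers, not a geometric series. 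With that corrected count your approach does recover the bound, but the paper's recurrence gets there with far less bookkeeping and yields the exact constant $\tfrac34$ immediately.
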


\begin{proof}
For $s \le i$ there has to be an empty $i$-square in $T$, as
$\mathrm{cap}(T) \ge 4^{-i}$, and we can insert $Q$ without any moves. In
the following, we assume $s > i$.
Let $Q$ be the $i$-square to be inserted. We can assume that we do not
choose an $i$-pixel with a remaining capacity of zero to pack $Q$ -- if
there were no other pixels, $\mathrm{cap}(T)$ would be zero as
well. Therefore, the chosen pixel, say $p$, must have a remaining
capacity of at least $4^{-s}$. From Observation~\ref{obs:largebad}
follows that the worst case for $p$ would be to be filled with 3
$k$-squares, for every $i < k \le s$. Let $v_i$ be the worst-case
volume of a reallocated $i$-pixel. We get $v_i \le \sum_{j=i+1}^s \frac{3}{4^j} = 4^{-i} - 4^{-s}$.

Now we have to consider cascading moves. Whenever we move an
$\ell$-square, $\ell > i$, to make room for $Q$, we might
have to reallocate a volume of $v_{\ell}$ to make room for the
$\ell$-square. Let $x_i$ be the total volume that is at most
reallocated when inserting an $i$-square. 
Then we get the recurrence $x_i = v_i + \sum_{j=i+1}^s 3 \cdot x_j$
with $x_s=v_s=0$. This resolves to $x_i=3/4 \cdot 4^{-i} \cdot (s-i)$.

$v_i$ cannot get arbitrarily large, as the
remaining capacity must suffice to insert an $i$-square. Therefore,
if all the possible $i$-pixels contain a volume of $4^{-s}$ (if some
contained more, we would choose those and avoid the worst case), we
can bound $s$ by $4^i \cdot 4^{-s} \ge 4^{-i} \Leftrightarrow s \le 2i$,
which leads to $c_\mathrm{total,max} \le \frac{3}{4} \cdot 4^{-i} \cdot i$.

With $|Q|=4^{-i}$ and $i < s < h(T)$ we get $c_\mathrm{total,max} \in
O(|Q| \cdot h(T))$. \qed
\end{proof}

\begin{cor}\label{cor:maxtotalvol}
Inserting a square into a quadtree configuration has a total volume
cost of no more than $3/16=0.1875$.
\end{cor}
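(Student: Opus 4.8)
The plan is to simply optimize the bound of Theorem~\ref{thm:volumebound} over the size parameter $i$ of the inserted square. Recall that the theorem gives, for an $i$-square inserted into a configuration of sufficient capacity,
\[
c_\mathrm{total,max} \le \frac{3}{4}\cdot 4^{-i}\cdot \min\{(s-i),\,i\},
\]
and that the argument bounding $s\le 2i$ in that proof already yields the cleaner estimate $c_\mathrm{total,max}\le \frac{3}{4}\cdot 4^{-i}\cdot i$ valid for every $i\in\Nz$. So it suffices to maximize the function $f(i)=\frac{3}{4}\cdot 4^{-i}\cdot i$ over $i\in\Nz$.

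First I would dispose of $i=0$, where $f(0)=0$ and indeed no reallocation is ever needed. For $i\ge 1$ I would compare consecutive values via the ratio
\[
\frac{f(i+1)}{f(i)} = \frac{4^{-(i+1)}(i+1)}{4^{-i}\, i} = \frac{i+1}{4i},
\]
which is $<1$ precisely when $i>\tfrac13$, i.e.\ for all $i\ge 1$. Hence $f$ is strictly decreasing on $\{1,2,3,\ldots\}$, so its maximum over $\Nz$ is attained at $i=1$, giving $f(1)=\frac{3}{4}\cdot\frac{1}{4}\cdot 1=\frac{3}{16}=0.1875$. Combining this with Theorem~\ref{thm:volumebound} yields $c_\mathrm{total}(\sigma)\le \tfrac{3}{16}$ for any single insertion request $\sigma$, as claimed. \qed

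There is essentially no hard part here: the only thing to be careful about is that the relevant quantity being maximized is the already-simplified bound $\frac{3}{4}4^{-i}i$ (with the worst-case relation $s\le 2i$ folded in), rather than the raw $\min\{s-i,i\}$ expression, and that the optimization is over the discrete set $\Nz$ so that one should check $i=0$ and $i=1$ explicitly rather than blindly differentiating the continuous relaxation (whose maximizer $i=1/\ln 4\approx 0.72$ is not an integer, though it happens to round to the correct answer).
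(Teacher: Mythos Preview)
Your proof is correct and follows the same approach as the paper: both maximize the bound $\tfrac{3}{4}\cdot 4^{-i}\cdot i$ from Theorem~\ref{thm:volumebound} over $i\in\Nz$ and observe that the maximum occurs at $i=1$. The paper simply asserts that ``it is easy to see that the worst case is attained for $i=1$,'' whereas you supply the explicit ratio test verifying that $f$ is decreasing on $\{1,2,\ldots\}$; this is a welcome bit of rigor but not a different argument.
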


\begin{proof}
Looking at Theorem~\ref{thm:volumebound} it is easy to see that the
worst case is attained for $i=1$: $c_\mathrm{total}
= 3/4 \cdot 4^{-1} \cdot 1 = 3/16=0.1875$. \qed
\end{proof}

\begin{figure}[h!]
\centering
\includegraphics[width=0.40\textwidth]{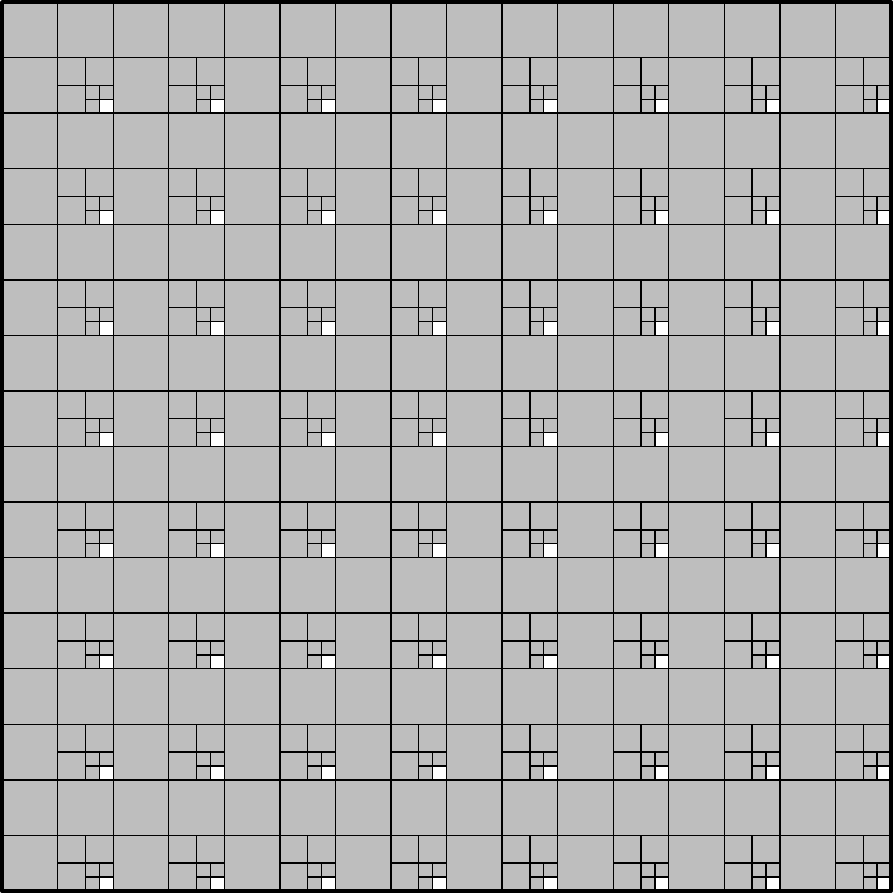}
\caption{The worst-case construction for volume cost for $s=6$ and
  $i=3$. Every 3-pixel contains three 4-, 5-, and 6-squares with only
  one remaining empty 6-pixel.}
\label{fig:wc_volume}
\end{figure}

\begin{thm}\label{thm:volumexample}
For every $i \in \Nz$ there are quadtree configurations $T$ for which the
insertion of an $i$-square $Q$ causes a total volume
cost of
\[
c_\mathrm{total,max} \ge \frac{3}{4} \cdot
4^{-i} \cdot \mathrm{min} \{(s-i), i\} \in \Omega(|Q| \cdot h(T))
\]
when the smallest previously inserted square is an $s$-square.
\end{thm}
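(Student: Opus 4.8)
The goal is a matching lower bound to Theorem~\ref{thm:volumebound}: we must exhibit, for each target exponent $i$, a configuration where inserting an $i$-square forces total volume cost at least $\frac34 \cdot 4^{-i} \cdot \min\{(s-i), i\}$. The natural approach is to construct the worst-case instance directly, mirroring the recurrence $x_i = v_i + \sum_{j=i+1}^s 3 x_j$ that drove the upper bound, and then verify that every insertion strategy on this instance pays at least the claimed amount. Figure~\ref{fig:wc_volume} already depicts the shape of the construction for $s=6$, $i=3$, so the proof is mostly a matter of describing that family in general and doing the bookkeeping.

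\textbf{Construction.} Fix $i$ and choose $s$; we handle the two regimes $s \le 2i$ and $s > 2i$ by taking $s' = \min\{s, 2i\}$, so that $\min\{s-i, i\} = s' - i$. Build a configuration $T$ in which the root has just enough capacity to admit one $i$-square — i.e. $\mathrm{cap}(T) = 4^{-i}$ — but this capacity is maximally ``spread out'': every free region is an empty $s'$-pixel, and there are exactly $4^{i} \cdot 4^{-s'} \cdot 4^{s'}=\dots$ — more precisely, $\mathrm{cap}(T)=4^{-i}$ is realized by $4^{\,s'-i}$ empty $s'$-pixels, arranged so that each $i$-pixel that could host $Q$ contains exactly one empty $s'$-pixel and is otherwise filled, for every $i<k\le s'$, with three $k$-squares (the ``$v_k$'' blocks). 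Concretely, one takes a single $i$-pixel $p$, fills it recursively: at layer $k$ put three occupied $k$-squares and recurse into the fourth child; at layer $s'$ leave the last child empty. Then replicate this filled-$i$-pixel pattern across enough siblings so that the only free capacity anywhere in $T$ equals $4^{-i}$ and is distributed precisely as described. (If $s \le 2i$, a single such $i$-pixel already suffices; if $s>2i$ one still stops the nesting at level $s'=2i$, because beyond that the capacity bound $4^i\cdot 4^{-s}\ge 4^{-i}$ fails — this is exactly the $s\le 2i$ truncation from Theorem~\ref{thm:volumebound}.)

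\textbf{Lower-bound argument.} Now argue that \emph{any} sequence of moves creating an empty $i$-pixel for $Q$ must reallocate total volume at least $x_i := \frac34 4^{-i}(s'-i)$. The key point is that no move can be ``skipped'': to free an $i$-pixel one must empty some $i$-pixel $p$, which by construction contains three $(i{+}1)$-squares (plus deeper material); each of those $(i{+}1)$-squares must be reallocated, and its destination must itself be an empty $(i{+}1)$-pixel, which by the same construction does not exist for free and must in turn be created by evicting three $(i{+}2)$-squares — and so on down to layer $s'$. This gives, by induction on $s'-i$, a lower bound on reallocated volume matching the recurrence: the three evicted $(i{+}1)$-squares contribute $3\cdot 4^{-(i+1)}$ of volume directly, plus $3x_{i+1}$ for the cascades they trigger; summing the geometric-type recurrence yields $x_i = \frac34 4^{-i}(s'-i)$, i.e. $\frac34 4^{-i}\min\{s-i,i\}$. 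One should state the induction carefully as a lemma of the form ``freeing an empty $k$-pixel in $T$ costs total volume $\ge \frac34 4^{-k}(s'-k)$'' with base case $k=s'$ (cost $0$, an empty $s'$-pixel is already present) and inductive step as above.

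\textbf{Main obstacle.} The delicate part is not the arithmetic but making the adversary argument airtight: one must show the algorithm has \emph{no cheaper alternative} — e.g. it cannot avoid cost by moving a large square out of the way instead of three smaller ones (ruled out by Observation~\ref{obs:largebad}, which says the large move is never cheaper and in fact triggers at least as much cascading), and it cannot find a ``lucky'' empty pixel elsewhere because the construction deliberately leaves no free capacity except the scattered $s'$-pixels. So the proof must (i) invoke Observation~\ref{obs:largebad} to reduce to the case where every eviction is of the smallest relevant squares, and (ii) use the fact that $\mathrm{cap}(T)=4^{-i}$ exactly, so there is zero slack — every unit of capacity needed for $Q$ is ``locked'' behind a cascade of depth $s'-i$. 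With those two observations in place the induction closes and the $\Omega(|Q|\cdot h(T))$ statement follows since $h(T)=s'+1$ and we may take $s$ (hence $s'$) as large as $2i$, or indeed let $i\to\infty$ for the asymptotic claim. \qed
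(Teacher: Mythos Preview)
Your construction and recurrence are exactly the paper's: fill every $i$-pixel with three $k$-squares for each $i<k\le s$ (with $s=2i$), leaving one empty $s$-pixel per $i$-pixel so that $\mathrm{cap}(T)=4^{-i}$; then observe the same cascade recurrence $x_i=v_i+\sum_{j=i+1}^s 3x_j$ resolves to $\tfrac34\cdot 4^{-i}(s-i)$.

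Two remarks. First, you go noticeably further than the paper on the adversary side: the paper's proof simply asserts ``you have to reallocate three $k$-squares for every $i<k\le s$'' and that each such move ``causes cascading moves,'' then writes down the recurrence without arguing that no algorithm can beat it. Your plan to set up an induction on $s'-k$ and invoke Observation~\ref{obs:largebad} to rule out shortcuts is a genuine addition of rigor; the paper treats the lower bound more as ``this construction realizes the worst case of the upper-bound analysis'' than as an adversarial argument against all strategies. Second, your $s'=\min\{s,2i\}$ device is slightly off: if you truncate the nesting at level $2i<s$, the smallest square in your configuration is a $2i$-square, not an $s$-square, so the hypothesis ``the smallest previously inserted square is an $s$-square'' is no longer met. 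The paper sidesteps this by simply fixing $s=2i$ (which already attains $\min\{s-i,i\}=i$, the largest possible value) and reading the theorem existentially in both $T$ and $s$; you should do the same rather than trying to cover arbitrary $s$.
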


\begin{proof}
We build a quadtree configuration to match the upper bound of
Theorem~\ref{thm:volumebound}. Let $s=2i$ and consider a subtree
rooted at an $i$-pixel
that contains three $k$-pixels for every $i < k \le s$. They do not have
to be arranged in such a way that the single free $s$-pixel is in the
lower right corner, but the nesting structure is important. Assume all
$4^i$ $i$-pixels of $T$ are constructed in such a way. Then you have
to reallocate three $k$-squares for every $i < k \le s$. However, every
fractional $k$-pixel in the configuration in turn contains three
$k'$-pixel for every $k < k' < s$, i.e., moving every $k$-square
causes cascading moves. See Figure~\ref{fig:wc_volume} for the whole
construction for $s=6$ and $i=3$. The reallocated volume without
cascading moves adds up to $v_i = \sum_{k=i+1}^s 3 \cdot 4^{-k}$.

Including cascading moves we get $x_i = v_i + \sum_{k=i+1}^s 3 \cdot x_k$,
which resolves to $x_i=3/4 \cdot 4^{-i} \cdot (s-i)$.

With
$s=h(T)-1$, $i=s/2$ and $|Q|=4^{-i}$ we get
$c_\mathrm{total,max} \in \Omega(|Q| \cdot h(T))$. \qed
\end{proof}

As a corollary we get an upper bound for the (relative) volume cost
and a construction matching the bound.

\begin{cor}\label{cor:relvoltight}
Inserting an $i$-square into a quadtree configuration $T$ with sufficient capacity
$\mathrm{cap}(T) \ge 4^{-i}$ causes a (relative) volume cost of at
most
\[
c_\mathrm{vol,max} \le \frac{3}{4} \cdot \mathrm{min} \{(s-i), i\} \in
\Theta(h(T)),
\]
when the smallest previously inserted square is an $s$-square,
and this bound is tight, i.e., there are configurations for which the
bound is matched.
\end{cor}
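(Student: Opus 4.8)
The proof is an immediate consequence of the two preceding results on total volume cost, so the plan is to derive it by dividing through by the volume of the inserted square. Recall that the (relative) volume cost of a request $\sigma$ is defined as $c_{\mathrm{vol}}(\sigma) = c_{\mathrm{total}}(\sigma) / |\mathrm{vol}(\sigma)|$, and here $|\mathrm{vol}(\sigma)| = |Q| = 4^{-i}$.

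For the upper bound, I would invoke Theorem~\ref{thm:volumebound}, which gives
\[
c_{\mathrm{total,max}} \le \tfrac{3}{4} \cdot 4^{-i} \cdot \min\{(s-i), i\}.
\]
Dividing both sides by $4^{-i} = |Q|$ yields $c_{\mathrm{vol,max}} \le \tfrac{3}{4} \cdot \min\{(s-i), i\}$ directly. For the asymptotic claim, note that the same reasoning used in Theorem~\ref{thm:volumebound} bounds $s$ by $2i$ and $i$ by roughly $h(T)/2$, so $\min\{(s-i),i\} \in \Theta(h(T))$; combined with the matching lower-bound construction below this gives the stated $\Theta(h(T))$.

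For tightness, I would point to the configurations constructed in Theorem~\ref{thm:volumexample}: there, inserting an $i$-square forces a total volume cost of at least $\tfrac{3}{4} \cdot 4^{-i} \cdot \min\{(s-i), i\}$. Dividing again by $|Q| = 4^{-i}$ shows that on those same configurations the (relative) volume cost equals $\tfrac{3}{4} \cdot \min\{(s-i), i\}$, matching the upper bound. Choosing parameters as in that proof (e.g.\ $s = h(T)-1$, $i = s/2$) makes the cost $\Omega(h(T))$, completing the $\Theta(h(T))$ characterization.

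There is essentially no obstacle here: the corollary is a rescaling of Theorems~\ref{thm:volumebound} and~\ref{thm:volumexample} by the constant factor $|Q| = 4^{-i}$, and the only thing to be careful about is matching the $\min\{(s-i),i\}$ term and the case $s \le i$ (where the cost is $0$ and the bound is vacuous) consistently across both directions.
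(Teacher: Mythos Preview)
Your proposal is correct and matches the paper's approach: the paper gives no explicit proof for this corollary, treating it as an immediate consequence of Theorems~\ref{thm:volumebound} and~\ref{thm:volumexample}, and your argument---dividing the total volume bounds by $|Q|=4^{-i}$---is precisely that derivation spelled out.
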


It is important to note that
relative volume cost can be arbitrarily bad by increasing the height
of the configuration, as opposed to total volume cost with the upper
bound derived in
Corollary~\ref{cor:maxtotalvol}. What is more, large total volume
cost is achieved by inserting $i$-squares for small $i$, whereas
large relative volume cost is only possible for large $i$ (and large
$s-i$). This has an interesting interpretation with regard to the structure of
the quadtree: Large total volume cost can happen when you assign a
square to a node close to the root. To get large relative volume cost
you need a high quadtree and assign a square to a node roughly in the
middle (with respect to height).

The same methods we used to derive worst case bounds for volume cost
can also be used to establish bounds for movement cost, which results
in $c_\mathrm{move,max} \le 4^{\mathrm{min}\{s-i, i\}}-1 \in
O(2^{h(T)})$. A matching construction is the same as the one in the
proof of Theorem~\ref{thm:volumexample}.

\begin{thm}\label{thm:movbound}
The maximum movement cost caused by the insertion of an
$i$-square $Q$, $i \in \Nz$, into a quadtree configuration $T$ with
$\mathrm{cap}(T) \ge 4^{-i}$ is bounded by
\[
c_\mathrm{move,max} \le 4^{\mathrm{min}\{s-i, i\}}-1 \in O(2^{h(T)})
\]
when the smallest previously inserted square is an $s$-square.
\end{thm}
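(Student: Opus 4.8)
The plan is to mirror the structure of the proof of Theorem~\ref{thm:volumebound}, replacing volume accounting by counts of moves. First I would dispose of the easy regime: if $s \le i$, then $\mathrm{cap}(T) \ge 4^{-i}$ forces the existence of an empty $i$-pixel, so $Q$ can be placed with no reallocations and the bound holds trivially since the exponent $\min\{s-i,i\}$ is zero. So from now on assume $s > i$.

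Next I would set up the worst-case recurrence. As in Theorem~\ref{thm:volumebound}, the $i$-pixel $p$ chosen to receive $Q$ cannot have remaining capacity zero, so by Observation~\ref{obs:largebad} the worst case is that $p$ is filled, for each $k$ with $i < k \le s$, by exactly three $k$-squares. Let $m_i$ denote the worst-case number of moves needed to vacate an $i$-pixel. Vacating $p$ requires relocating these $3(s-i)$ squares directly, but each relocation of a $k$-square may itself trigger the vacating of another $k$-pixel, costing $m_k$ further moves. This gives the recurrence $m_i = \sum_{j=i+1}^{s} 3\,(1 + m_j)$ with $m_s = 0$. Solving it (for instance, forming $m_{i} - m_{i+1}$ to telescope, or verifying the closed form by downward induction on $i$) yields $m_i = 4^{\,s-i} - 1$. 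This is the key computation, and I expect it to be the main technical step — though it is routine: the factor $4 = 1 + 3$ is exactly what makes the geometric collapse work, just as the factor accounting worked out to $(s-i)$ in the volume case.

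Then I would bound $s$ in terms of $i$, exactly as in Theorem~\ref{thm:volumebound}. If every candidate $i$-pixel holds volume as large as $4^{-s}$ worth of content (if any held more, a better pixel would be chosen and we would not be in the worst case), then the $4^i$ candidate $i$-pixels together leave capacity at least $4^{-i}$ only if $4^{i}\cdot 4^{-s} \ge 4^{-i}$, i.e.\ $s \le 2i$, hence $s - i \le i$. Combining the two regimes gives $c_{\mathrm{move,max}} \le 4^{\min\{s-i,\,i\}} - 1$.

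Finally, for the asymptotic claim, note $\min\{s-i,i\} \le i < s < h(T)$, and the worst case $s = 2i$ with $s = h(T)-1$ makes $\min\{s-i,i\} = i = (h(T)-1)/2$, so $c_{\mathrm{move,max}} = 4^{(h(T)-1)/2} - 1 = O(2^{h(T)})$. The only subtlety I anticipate is being careful that the recurrence correctly counts the move of each relocated square itself (the ``$1+$'' inside the sum) in addition to its induced cascade — unlike the volume recurrence, where the direct contribution $v_i$ was separated out and a single square's own volume is already folded into $v_i$; here the base move and the cascade both scale with the number of squares, which is precisely why the bound is exponential rather than linear in $h(T)$.
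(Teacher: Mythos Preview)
Your proposal is correct and follows essentially the same approach as the paper: set up the recurrence $m_i = 3(s-i) + \sum_{j=i+1}^{s} 3\,m_j$ (which is exactly your $\sum_{j=i+1}^s 3(1+m_j)$ expanded), solve to $4^{s-i}-1$, then use the capacity argument to get $s \le 2i$ and conclude the $O(2^{h(T)})$ bound. The paper's proof is terser and simply points to the analogy with Theorem~\ref{thm:volumebound}; your write-up is a faithful fleshing-out of that analogy, with the only cosmetic difference being that you fold the direct-move count into the summand rather than separating it out as $v_i = 3(s-i)$.
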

\begin{proof}
The proof is analogous to the proof of
Theorem~\ref{thm:volumebound}. We can use
Observation~\ref{obs:largebad} and formulate a new recurrence. The
number of reallocations without cascading moves caused by the
insertion of $Q$ can be bounded by $v_i \le 3(s-i)$ and including
cascading moves we get $x_i = v_i + \sum_{j=i+1}^s 3 x_i$, which
resolves to $x_i = 4^{s-i} - 1$.

As we need at least $4^{-i}$ remaining capacity to insert $Q$ we can
again deduce $s \le 2i$. With $s=h(T)-1$ we get $\mathrm{min}\{s-i,
i\} \le h(T)/2$, which results in the claimed bound. \qed
\end{proof}

\begin{thm}\label{thm:movexample}
For every $i \in \Nz$ there are quadtree configurations $T$ for which the
insertion of an $i$-square $Q$ causes a movement
cost of
\[
c_\mathrm{move,max} \ge 4^{\mathrm{min}\{s-i, i\}} \in \Omega(2^{h(T)})
\]
when the smallest previously inserted square is an $s$-square.
\end{thm}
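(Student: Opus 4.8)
The plan is to mirror the construction used in the proof of Theorem~\ref{thm:volumexample}, since the worst-case movement cost and the worst-case volume cost are achieved by the very same nesting of fractional pixels; only the accounting changes. First I would fix $i \in \Nz$, set $s = 2i$ (so that $\min\{s-i,i\} = i$), and build a quadtree configuration $T$ in which every one of the $4^i$ available $i$-pixels is \emph{saturated} in the worst-case sense of Observation~\ref{obs:largebad}: each such $i$-pixel contains exactly three $k$-squares for every $k$ with $i < k \le s$, arranged recursively so that each of those fractional $k$-pixels again contains three $k'$-squares for every $k < k' \le s$, leaving a single empty $s$-pixel at the bottom of each chain. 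One checks that the volume bookkeeping is consistent — $\sum_{k=i+1}^{s} 3\cdot 4^{-k} + 4^{-s} = 4^{-i}$ — so each $i$-pixel is exactly full and hence $\mathrm{cap}(T)$ is as small as it can be while still admitting the insertion; in particular there is no empty $i$-pixel anywhere, forcing reallocations.

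Next I would argue that inserting $Q$ into any $i$-pixel $p$ of this $T$ forces the full cascade. To free $p$ we must move its three $(i+1)$-squares; but by construction no empty $(i+1)$-pixel exists, so each such move itself triggers the freeing of an $(i+1)$-pixel, which requires moving three $(i+2)$-squares, and so on down to layer $s$, where finally an empty $s$-pixel is available to receive the last square in each chain. Counting moves layer by layer with the recurrence $x_i = v_i + \sum_{j=i+1}^{s} 3\,x_j$ and $v_i = 3(s-i)$, $x_s = 0$ — exactly the recurrence in Theorem~\ref{thm:movbound} — gives $x_i = 4^{\,s-i} - 1$ reallocations caused by emptying $p$, and then one further move to actually place $Q$ into $p$, for a total of at least $4^{\,s-i} = 4^{\min\{s-i,i\}}$ moves. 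Substituting $s = h(T)-1$ and $i = s/2$ yields $\min\{s-i,i\} \ge h(T)/2 - 1$ and hence $c_{\mathrm{move,max}} \in \Omega(2^{h(T)})$.

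The main obstacle I anticipate is not the recurrence — that is routine and already appears in the excerpt — but making the construction genuinely \emph{forcing}: one must rule out the possibility that the insertion algorithm sidesteps the cascade by choosing a cleverer combination of maximally empty pixels elsewhere in the tree (as Theorem~\ref{thm:qtmoves} would in general allow). Here this is handled by saturating \emph{every} $i$-pixel identically, so that the configuration is symmetric and there simply is no cheaper option: every maximally empty pixel in $T$ is an $s$-pixel, and by Lemma~\ref{lem:order} assembling an empty $i$-pixel requires consuming four of them at layer $s$, then four at layer $s-1$, and so on, which is precisely the cascade counted above. A secondary point worth spelling out is that the bound in the statement is $4^{\min\{s-i,i\}}$ rather than $4^{\min\{s-i,i\}}-1$ as in the upper bound of Theorem~\ref{thm:movbound}; the discrepancy of one is exactly the move that places $Q$ itself, so the lower- and upper-bound constructions match up to this unit, and I would remark on this to make clear the bounds are tight.
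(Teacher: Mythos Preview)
Your approach is essentially identical to the paper's: reuse the construction from Theorem~\ref{thm:volumexample} with $s=2i$, observe that every fractional $j$-pixel contains three $(j+1)$-squares so the full cascade is triggered, and solve the same recurrence $x_i = 3(s-i) + 3\sum_{j=i+1}^{s} x_j = 4^{s-i}-1$. Your added discussion of why the configuration is genuinely \emph{forcing} (every maximally empty pixel lies in layer $s$, so no cheaper rearrangement is available to the algorithm of Theorem~\ref{thm:qtmoves}) is a welcome elaboration that the paper leaves implicit.

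One small correction: your reconciliation of the off-by-one --- attributing the missing $+1$ to ``one further move to actually place $Q$'' --- does not match the paper's definition of movement cost, which counts only \emph{reallocations} of already-assigned squares, not the initial assignment of the newly inserted square $Q$. In fact the paper's own proof also arrives at $4^{s-i}-1$ and stops there, so the gap between the stated bound $4^{\min\{s-i,i\}}$ and the argued $4^{\min\{s-i,i\}}-1$ is a minor inconsistency in the paper itself rather than something you should (or can) repair by counting the insertion of $Q$. The asymptotic conclusion $\Omega(2^{h(T)})$ is of course unaffected either way.
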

\begin{proof}
The example from Theorem~\ref{thm:volumexample} works here as well. As
every fractional $j$-pixel, $j < s$, contains three $(j+1)$-pixels,
you have to move three squares for every $j=i,\ldots,s-1$ and account
for cascading moves. This results in a number of moves $c_{move,max}
\ge x_i = 3(s-i) + \sum_{j=i+1}^s x_j = 4^{s-i} - 1$, where
$s=2i=h(T)-1$. \qed
\end{proof}

\section{Online Packing and Reallocation}

Applying Theorem~\ref{thm:qtmoves} repeatedly to successive configurations
yields a strategy for the dynamic allocation of aligned squares.

\begin{cor}\label{cor:strategy}
Starting with an empty square and given a valid, aligned sequence of
requests, there is a strategy that fulfills every request in the
sequence.
\end{cor}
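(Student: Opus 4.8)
The plan is to argue by induction on the position in the request sequence, maintaining the invariant that after every request has been served, the current configuration $T$ is a valid quadtree configuration whose packed squares are exactly the squares inserted so far but not yet deleted. We start with the empty unit square, which trivially satisfies the invariant. Now consider the $j$-th request $\sigma_j$, assuming the invariant holds for the configuration $T_{j-1}$ obtained after the first $j-1$ requests.

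If $\sigma_j = \textsc{Delete}(x)$, we simply remove the assignment of the square $x$ from its pixel; this clearly yields a valid quadtree configuration, and since deletions only decrease occupied volume, validity of the request sequence is not even needed here. If $\sigma_j = \textsc{Insert}(x)$, where $x$ is an $i$-square (using that the sequence is aligned, so $|\mathrm{vol}(\sigma_j)| = 4^{-i}$ for some $i \in \Nz$), the key point is to show that $\mathrm{cap}(T_{j-1}) \ge 4^{-i}$, so that Theorem~\ref{thm:qtmoves} applies and lets us transform $T_{j-1}$ into a configuration $T^*$ containing an empty $i$-pixel; we then assign $x$ to that pixel, restoring the invariant. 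To see the capacity bound, note that by Lemma~\ref{lem:fullcap} (or directly from the definition) $\mathrm{cap}(T_{j-1})$ equals the total volume of the unit square minus the volumes of all currently packed squares, i.e. $\mathrm{cap}(T_{j-1}) = 1 - \sum_{i < j,\ \sigma_i \text{ packs a live square}} |s_i| = 1 - \sum_{i=1}^{j-1} \mathrm{vol}(\sigma_i)$, where the last equality uses that deletions cancel the contribution of the corresponding insertions. Validity of the sequence gives $\sum_{i=1}^{j} \mathrm{vol}(\sigma_i) \le 1$, hence $\sum_{i=1}^{j-1} \mathrm{vol}(\sigma_i) \le 1 - \mathrm{vol}(\sigma_j) = 1 - 4^{-i}$, so $\mathrm{cap}(T_{j-1}) \ge 4^{-i}$ as required.

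The only subtlety — and the step I expect to need the most care — is justifying the identity $\mathrm{cap}(T) = 1 - (\text{volume of packed squares})$ rigorously from the recursive definition of $\mathrm{cap}$, since that definition zeroes out the capacity of blocked pixels rather than literally subtracting packed volume. One clean way is a small auxiliary induction on the quadtree structure: for any pixel $p$ that is empty, fractional, or occupied (i.e. not blocked), $\mathrm{cap}(p) = |p| - (\text{volume of squares contained in } p)$; this follows immediately for empty and occupied pixels, and for fractional pixels from summing the identity over the (non-blocked) children and noting blocked children contribute zero capacity and are contained in some sibling's occupied ancestor — actually more simply, a child of a fractional pixel is never blocked from above within $p$, so one sums over all four children. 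Applying this to the root yields the desired global identity. With that in hand, the induction goes through and every request in the sequence is fulfilled, which is exactly the claim of Corollary~\ref{cor:strategy}.
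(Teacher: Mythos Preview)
Your proposal is correct and follows essentially the same approach as the paper: induct over the request sequence, handle deletions trivially, and for insertions use validity to bound $\mathrm{cap}(T_{j-1}) \ge 4^{-i}$ so that Theorem~\ref{thm:qtmoves} applies. The only difference is that you spell out and justify the identity $\mathrm{cap}(T) = 1 - (\text{total packed volume})$ via a small structural induction, whereas the paper treats this as immediate from the definition; your extra care here is warranted and the argument you sketch (children of a fractional pixel are never blocked, so the recursion telescopes) is correct.
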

\begin{proof}
We only have to deal with aligned squares and can use quadtree
configurations to pack the squares, since the sequence of requests
$\sigma_1, \sigma_2, \ldots, \sigma_k$ is aligned. We start with the
empty configuration that contains only one empty $0$-pixel. Thus, we
have a configuration with capacity $1$. We only have to consider
insertions, because deletions can always be fulfilled by definition.

As the sequence of requests is valid, whenever a request $\sigma_\ell$
demands to insert a $j$-square $s$, the remaining capacity of the
current quadtree configuration $T$ is at least
$1 - \sum_{i=1}^{\ell-1} \mathrm{vol}(\sigma_i) + 4^{-j} \ge 4^{-j}$.

Therefore, we can use Theorem~\ref{thm:qtmoves} to transform $T$ into
a configuration $T^*$ with an empty $j$-pixel $p$. We assign $s$
to $p$. \qed
\end{proof}

This strategy may incur the heavy insertion cost
  derived in the previous section. However, when we do not have to
  work with a given configuration and have the freedom to handle all
  requests starting from the empty unit square, we can use the added
  flexibility to derive a more sophisticated strategy. In particular,
  we can use reallocations to clean up a configuration when squares
  are deleted. This can make deletions costly operations, but allows us
  to eliminate insertion cost entirely.

\subsection{First-Fit Packing}

We present an algorithm that fulfills any valid, aligned sequence of
requests and does not cause any reallocations on insertions. We call
it \emph{First Fit} in imitation of the well-known technique employed
in one-dimensional allocation problems.

Given a one-dimensional
packing and a request to allocate space for an additional item,
First-fit chooses the first suitable location. In one dimension it is
trivial to define an order in which to check possible locations. For
example, assume your resources are arranged horizontally and proceed
from left to right.


{\color{black}
Finding an order in two or more dimensions
is less straightforward than in 1D. We use space-filling curves to overcome this
impediment. Space-filling curves are of theoretical interest, because
they fill the entire unit square
(i.e., their Hausdorff dimension is $2$). 
More useful for us are the schemes used to create a
space-filling curve, which employ a recursive construction on the nodes
of a quadtree and become space-filling as the height of the tree
approaches infinity. In particular, they provide an order for the
nodes of a quadtree. In the following, we make use of the z-order
curve~\cite{morton_1966}.
}

First Fit assigns items to be packed to the next available position in
z-order. We denote the position
of a pixel $p$ in z-order by $z(p)$, i.e.,
$z(p) < z(q)$ if and only if $p$ comes before $q$ in z-order.

In
general, the z-order is only a partial order, as it does not make
sense to compare nodes with their parents or children. However, there
are three important occasions for which the z-order is a total order: If
you only consider pixels in one layer, if you only consider
occupied pixels, and if you only consider maximally empty pixels. In
all three cases pixels are pairwise disjoint, which
leads to a total order.

\begin{figure}[h!]
\centering
\includegraphics[width=0.45\textwidth]{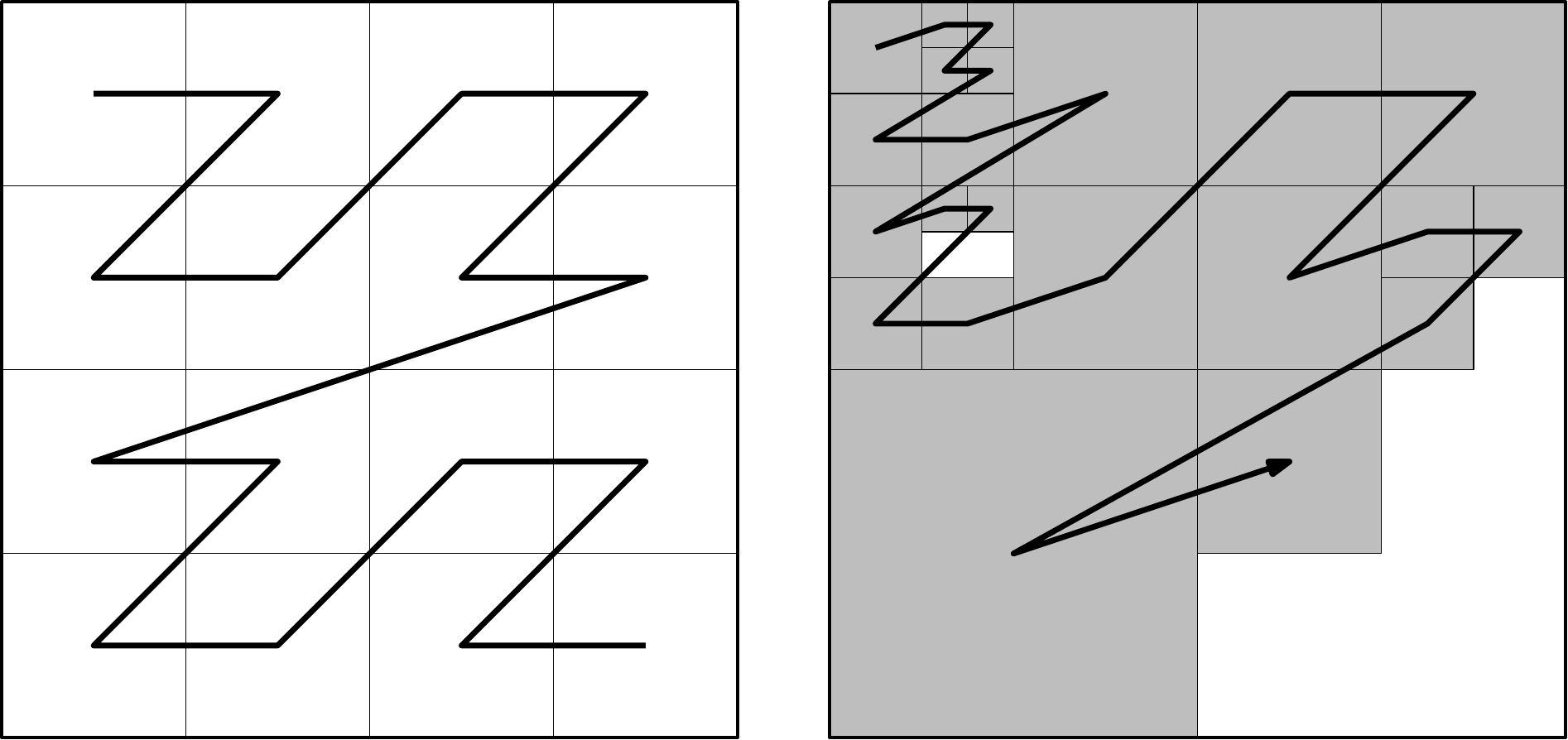}
\caption{The z-order for layer 2 pixels (left); a First Fit allocation and the z-order of the occupied pixels
-- which is not necessarily the insertion order (right).}
\label{fig:firstfit}
\end{figure}

First Fit proceeds as follows: A request to insert an $i$-square
$Q$ is handled by assigning $Q$ to the first empty $i$-pixel in
z-order; see Figure~\ref{fig:firstfit}. Deletions are more
complicated. After unassigning a
deleted square $Q$ from a pixel $p$ the following procedure handles
reallocations (an example deletion can be seen in
Figure~\ref{fig:invstrategy}):

\begin{algorithmic}[1]
\State $S \gets \{p'\}$, where $p'$ is the maximally empty pixel
       containing $p$
\While{$S \ne \varnothing$}
  \State Let $a$ be the element of $S$ that is first in z-order.
  \State $S \gets S \setminus \{a\}$
  \State Let $b$ be the last occupied pixel in z-order.
  \While{$z(b) > z(a)$}
    \If{the square assigned to $b$, $B$, can be packed into $a$}
      \State Assign $B$ to the first suitable descendant of $a$ in
             z-order.
      \State Unassign $B$ from $b$.
      \State Let $b'$ be the maximally empty pixel containing $b$.
      \State $S \gets S \cup \{b'\}$
      \State $S \gets S \setminus \{b'': b''\text{ is child of }b'\}$
    \EndIf
    \State Move the pointer $z$ back in z-order to the next occupied
           pixel.
  \EndWhile
\EndWhile
\end{algorithmic}

The general idea is to reallocate squares from the current end of the
z-order to empty spots. As reallocating creates new empty squares, we
need to apply the method repeatedly in what can be considered an
inverse case of cascading moves. We ensure termination by always
moving the currently {\color{black}considered} empty pixel in positive z-order and
reallocating squares in negative z-order. We analyze the strategy in
more detail now.

\begin{inv}\label{inv:inv}
For every empty $i$-pixel $p$ in a quadtree configuration $T$ there is
no occupied $i$-pixel $q$ with $z(q) > z(p)$.
\end{inv}

\begin{lem}\label{lem:invcompact}
Every quadtree configuration $T$ satisfying Invariant~\ref{inv:inv}
is compact.
\end{lem}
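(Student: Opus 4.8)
The plan is to argue by contradiction. Suppose $T$ satisfies Invariant~\ref{inv:inv} but is not compact, so that for some layer $j$ there are two distinct open $j$-pixels $p_1 \neq p_2$. Since they lie in the same layer they are disjoint, hence comparable in z-order; assume $z(p_1) < z(p_2)$. The idea is to produce an empty pixel ``on the left'' and an occupied pixel ``on the right'' sitting in a common layer, which then contradicts the invariant.

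First I would extract the two witnesses. Because $p_1$ is open it is fractional and has a (maximally) empty child $c$, a $(j+1)$-pixel. Because $p_2$ is open it is in particular fractional, so it contains a square; that square is assigned to an occupied pixel $q$ which is a \emph{proper} descendant of $p_2$ (proper, since $p_2$ is fractional and therefore not occupied), say $q$ is a $k$-pixel with $k \ge j+1$.

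The main step is to move both witnesses to the common layer $k$. Every descendant of an empty pixel is again empty --- a child of a free pixel that contains no square is itself free and contains no square --- so, descending inside $c$ (subdividing it if the quadtree does not already store those nodes, which changes neither the packed squares nor the status of any pixel), we obtain an empty $k$-pixel $c'$ in the subtree of $p_1$. Since the z-order is constructed recursively on the quadtree, each pixel's subtree is traversed contiguously and the entire subtree of $p_1$ precedes the entire subtree of $p_2$ whenever $z(p_1) < z(p_2)$; hence $z(c') < z(q)$. Now $c'$ is an empty $k$-pixel and $q$ an occupied $k$-pixel with $z(q) > z(c')$, contradicting Invariant~\ref{inv:inv}. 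Therefore $T$ is compact.

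I expect the layer-matching step to be the only delicate point: the empty child of $p_1$ lives at layer $j+1$, whereas the occupied descendant of $p_2$ may lie arbitrarily deep, so one genuinely has to pass to a descendant of the empty pixel at exactly layer $k$. The two supporting facts it rests on --- that empty pixels have only empty descendants, and that the recursively built z-order totally orders the descendants of two disjoint pixels consistently with the order of the pixels themselves --- are routine but should be stated, the latter being precisely what makes z-order a total order on the occupied (resp.\ maximally empty) pixels.
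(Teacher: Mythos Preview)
Your proof is correct and follows essentially the same route as the paper's: assume two open pixels $p_1,p_2$ in a common layer, take an empty child of the earlier one and an occupied proper descendant of the later one, descend in the empty side to match layers, and invoke the z-order consistency on disjoint subtrees to violate the invariant. The only difference is presentational---you spell out more carefully why empty pixels have empty descendants and why the z-order of descendants is inherited from that of their ancestors, points the paper leaves implicit.
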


\begin{proof}
Assume a quadtree configuration $T$ is not compact. Then it contains
two fractional $i$-pixels, $i \in \N$, $p$ and $q$ with maximally
empty children
$p'$ and $q'$, respectively. Without loss of generality, assume $z(p)
< z(q)$. As $q$ is fractional, there is a $j$-square, $j > i$, assigned
to some descendant of $q$, say $q''$. However, $p'$ is an empty
$(i+1)$-pixel and therefore contains an empty $j$-pixel, $p''$. As
$z(p) < z(q)$, we also have $z(p'') < z(q'')$ and
Invariant~\ref{inv:inv} does not hold. \qed
\end{proof}

\begin{lem}\label{lem:3max}
In a compact quadtree configuration $T$ there are at most three
maximally empty $j$-pixels for every $j \in \Nz$.
\begin{proof}
The statement holds for $j=0$, since there is only one $0$-pixel. For
$j>0$ there is at most one open $(j-1)$-pixel $p$ in $T$, because $T$
is compact. Therefore, all other $(j-1)$-pixels except for $p$ either
do not have an empty child or are maximally empty themselves. Thus,
all maximally empty $j$-pixels have to be children of $p$. Since $p$
is not empty, there can be at most three. \qed
\end{proof}
\end{lem}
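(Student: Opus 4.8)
The plan is to show that, in a compact configuration, all maximally empty $j$-pixels are siblings that share a common parent which is forced to have a non-empty child, so at most three of its four children can be empty. First I would dispose of the base case $j=0$: a quadtree has a single root pixel, so there is trivially at most one maximally empty $0$-pixel.

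For $j \ge 1$, fix a maximally empty $j$-pixel $p$ and let $q$ be its parent, which is a $(j-1)$-pixel. By the definition of ``maximally empty'', $q$ is not empty. The next step is to check, by chasing the definitions in Definition~1, that $q$ is neither blocked nor occupied: if $q$ were occupied, or if some ancestor of $q$ were occupied, then $p$ would be blocked rather than empty. Hence $q$ is free, and since it is free, not empty, but contains the square-free subtree rooted at $p$ together with whatever makes it non-empty, it must be \emph{fractional}; having the empty child $p$, it is therefore \emph{open}. Since $T$ is compact, there is at most one open $(j-1)$-pixel, so every maximally empty $j$-pixel is a child of this one open pixel $q$.

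It then remains to bound the number of empty children of $q$. Because $q$ is fractional it contains an assigned square, so at least one of its four children contains a square and is thus occupied or fractional -- in particular not empty. Hence at most three of the children of $q$ can be (maximally) empty, which yields the claimed bound; note that $q$ being the unique open $(j-1)$-pixel is exactly what prevents any further maximally empty $j$-pixels from appearing elsewhere in the tree.

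The only real obstacle is the bookkeeping in the second paragraph: one has to verify carefully, from the definitions of blocked/free/empty/fractional/open, that the parent of a maximally empty pixel is precisely an open pixel, so that compactness can be applied. Once that identification is made, the ``all siblings'' conclusion and the counting of non-empty children are immediate.
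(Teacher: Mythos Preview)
Your proof is correct and follows essentially the same route as the paper's: show that the parent of any maximally empty $j$-pixel is an open $(j-1)$-pixel, invoke compactness to conclude all such $j$-pixels share this single parent, and then observe that a fractional (hence non-empty) parent can have at most three empty children. You are somewhat more explicit than the paper in verifying from the definitions that the parent is free and fractional (rather than occupied or blocked), but the structure of the argument is identical.
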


\begin{lem}\label{lem:compactspace}
Given an $\ell$-square $s$ and a compact quadtree configuration $T$,
then $s$ can be assigned to an empty $\ell$-pixel in $T$, if and only
if $\mathrm{cap}(T) \ge 4^{-l}$.
\end{lem}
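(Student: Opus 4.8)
\emph{Proof plan.} The statement is an equivalence, so the plan is to handle the two directions separately; the forward one is nearly immediate and all the real work sits in the converse, which is essentially a ``no-moves'' counterpart of Theorem~\ref{thm:qtmoves} made possible by compactness.

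For ``$s$ assignable $\Rightarrow \mathrm{cap}(T)\ge 4^{-\ell}$'' I would proceed as follows. If $s$ can be placed on an empty $\ell$-pixel $p$, walk up from $p$ toward the root. Since $p$ is empty, hence free, no pixel on this path is occupied or blocked, so the walk terminates either at the root (if the root is empty, then $\mathrm{cap}(T)=1\ge 4^{-\ell}$ and we are done) or at a pixel $p^{*}$ whose parent is fractional, i.e.\ at a maximally empty pixel containing $p$. As $p^{*}$ is an ancestor of $p$ (or $p$ itself), its volume is at least $4^{-\ell}$, and being maximally empty, $\mathrm{cap}(p^{*})$ equals that volume. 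By Lemma~\ref{lem:fullcap}, $\mathrm{cap}(T)$ is the sum of the non-negative capacities of all maximally empty pixels, one of which is $p^{*}$, so $\mathrm{cap}(T)\ge \mathrm{cap}(p^{*})\ge 4^{-\ell}$.

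For the converse I would argue by contraposition. Suppose $T$ has no empty $\ell$-pixel; since any empty pixel at a layer $\le\ell$ could be subdivided into an empty $\ell$-pixel, this means every maximally empty pixel of $T$ lies in some layer $j\ge\ell+1$. Compactness now enters via Lemma~\ref{lem:3max}: there are at most three maximally empty $j$-pixels for each $j$. Writing $n_j\le 3$ for their number and invoking Lemma~\ref{lem:fullcap} again,
\[
\mathrm{cap}(T)=\sum_{j\ge \ell+1} n_j\,4^{-j}\le \sum_{j=\ell+1}^{h(T)-1} 3\cdot 4^{-j} < \sum_{j=\ell+1}^{\infty} 3\cdot 4^{-j} = 4^{-\ell},
\]
so $\mathrm{cap}(T)<4^{-\ell}$. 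The step I expect to be the (only, and minor) obstacle is justifying the truncation of the sum at $h(T)-1$, which is exactly what makes the geometric-series inequality strict: the parent of a maximally empty $j$-pixel cannot be occupied or blocked (else the child would be blocked, not free), so it is fractional and contains a square; that square lies at a layer $\ge j$ and, being a $k$-square with $k\le h(T)-1$ by the definition of height, forces $j\le h(T)-1$. The degenerate cases are harmless: if $h(T)=0$ the root is an empty pixel and an empty $\ell$-pixel exists by subdivision, so the hypothesis fails outright, and if $h(T)-1<\ell+1$ the displayed sum is empty and $\mathrm{cap}(T)=0<4^{-\ell}$. Combining the two directions completes the proof.
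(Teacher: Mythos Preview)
Your proof is correct and follows essentially the same approach as the paper: the forward direction is immediate, and the converse is argued by contraposition using Lemma~\ref{lem:3max} together with Lemma~\ref{lem:fullcap} to bound $\mathrm{cap}(T)$ by a truncated geometric series strictly below $4^{-\ell}$. The only cosmetic difference is that the paper truncates the sum at the layer $s$ of the smallest assigned square (which equals your $h(T)-1$ by the definition of height), and it is somewhat terser about the forward direction and the degenerate cases you spell out.
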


\begin{proof}
The direction from left to right is obvious, as there can be no empty
$\ell$-pixel if the capacity is less than $4^{-l}$. For the other
direction assume there is no empty $\ell$-pixel in $T$. Since
there is no empty $\ell$-pixel, there is also no empty $j$-pixel for
any $j < \ell$. Let the smallest square assigned to a node be an
$s$-square. As $T$ is compact, we can use Lemma~\ref{lem:3max}
and {\color{black}Lemma~\ref{lem:fullcap}} to bound the remaining capacity of $T$ from
above: $\mathrm{cap}(T) \le \sum_{k=l+1}^{s} 3 \cdot 4^{-k} = 4^{-\ell} -
4^{-s} < 4^{-\ell}$. \qed
\end{proof}

In other words, packing an $\ell$-square in a compact configuration
requires no reallocations.

\begin{thm}\label{thm:ff}
The strategy presented above is correct. In particular,
\begin{enumerate}
\item every valid insertion request is fulfilled at zero cost,
\item every deletion request is fulfilled,
\item after every request Invariant~\ref{inv:inv} holds.
\end{enumerate}
\end{thm}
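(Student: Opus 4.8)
The plan is to prove the three claims of Theorem~\ref{thm:ff} by a joint induction over the sequence of requests, maintaining as the inductive hypothesis that Invariant~\ref{inv:inv} holds before each request is processed (it holds trivially for the empty initial configuration). Under that hypothesis, Lemma~\ref{lem:invcompact} tells us the current configuration is compact, which is the structural fact that drives everything else. I would treat insertions and deletions separately.

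\textbf{Insertions.} Suppose the request is \textsc{Insert($Q$)} for an $i$-square, and the request is valid, so $\mathrm{cap}(T)\ge 4^{-i}$. Since $T$ is compact, Lemma~\ref{lem:compactspace} gives an empty $i$-pixel, so First Fit assigns $Q$ to the first empty $i$-pixel $p$ in z-order at zero cost; this establishes claim~(1) for this step. It remains to re-establish Invariant~\ref{inv:inv}. The key point is that $p$ was the \emph{first} empty $i$-pixel in z-order before the insertion; by the invariant (applied before the request) every occupied $i$-pixel already had z-order position smaller than that of every empty $i$-pixel, hence smaller than $z(p)$, so after occupying $p$ the invariant still holds for layer $i$. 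For layers $j\ne i$ I would argue that occupying $p$ cannot create a violation: for $j<i$ the pixel $p$ lies inside a $j$-pixel that was already non-free (blocked or occupied or fractional), so it contributes no new occupied ancestor-level conflict; for $j>i$, occupying $p$ turns former empty/fractional $j$-descendants of $p$ into blocked pixels, which only removes empty $j$-pixels and adds no occupied ones, so no violation is introduced. I should be a bit careful to phrase the $j<i$ and $j>i$ cases in terms of the definition of the invariant (no occupied $j$-pixel lying after an empty $j$-pixel) rather than hand-wave.

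\textbf{Deletions.} For \textsc{Delete($Q$)}, claim~(2) is immediate since a deletion only unassigns a square and releases area; the content is claim~(3), that the reallocation procedure restores Invariant~\ref{inv:inv}, together with the implicit requirement that the procedure terminates. For termination I would exhibit a monotone progress measure: the outer loop only ever inserts into $S$ maximally empty pixels $b'$ that arise strictly after the current pixel $a$ in z-order (the inner loop only moves squares with $z(b)>z(a)$), while it removes $a$, which is the z-order-minimum of $S$; combined with line~12, which discards children of a newly inserted maximally empty pixel so that $S$ stays an antichain of maximally empty pixels, one gets that the z-order value of $\min_{z} S$ strictly increases across outer iterations, and since there are finitely many pixels up to the finite height actually in use, the procedure halts. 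For correctness of the invariant at termination, I would argue that when the outer loop exits with $S=\varnothing$, there is no maximally empty pixel $a$ for which some occupied pixel $b$ with $z(b)>z(a)$ could have its square packed into $a$; translating "packed into $a$" via Lemma~\ref{lem:3max}/compactness into a statement about pixel sizes, this is exactly the contrapositive shape of Invariant~\ref{inv:inv} — if some empty $i$-pixel $p$ had an occupied $i$-pixel $q$ with $z(q)>z(p)$ after termination, then the maximally empty pixel $p'$ containing $p$ would still admit the square at $q$, so $p'$ (or something z-order-before it) would still be a candidate in $S$, contradicting $S=\varnothing$.

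\textbf{Main obstacle.} The routine parts are the insertion argument and the release/termination bookkeeping; the delicate part is the loop-invariant analysis of the deletion procedure — precisely specifying what is true of $S$ and of the configuration at the top of each outer iteration (e.g. "every occupied pixel $b$ with $z(b)$ beyond the current frontier that could be packed into a maximally empty pixel $a$ is accounted for either by $a\in S$ or by some z-order-earlier element of $S$"), and checking that lines~7--12 preserve that invariant, including the subtle set-surgery on line~12 that keeps $S$ an antichain and prevents re-processing. I would also need to verify that moving $B$ into "the first suitable descendant of $a$ in z-order" does not itself violate Invariant~\ref{inv:inv} locally — this is the same reasoning as the insertion case applied within the subtree rooted at $a$, so I would factor it out as a small sub-claim and reuse it.
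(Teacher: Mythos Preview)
Your plan is correct and structurally matches the paper's proof: joint induction on requests, with claim~(1) obtained from Lemmas~\ref{lem:invcompact} and~\ref{lem:compactspace} once Invariant~\ref{inv:inv} is in hand, claim~(2) immediate, and the real work lying in claim~(3) for deletions via a loop-invariant argument together with termination by monotone advance of the processed pixel in z-order.

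The one place where the paper's argument is tidier than your outline is the choice of loop invariant for the deletion procedure. You propose tracking the contents of $S$ (``every potential violation is accounted for by some element of $S$'') and then arguing by contradiction at termination; this works but forces you into the line-12 antichain bookkeeping you flag as the main obstacle. The paper instead maintains the simpler statement ``whenever line~3 selects $a$, Invariant~\ref{inv:inv} already holds for all pixels up to $a$ in z-order.'' Since the successive values of $a$ are strictly increasing in z-order (new elements $b'$ entering $S$ arise from pixels $b$ with $z(b)>z(a)$), this invariant sweeps forward over the configuration and yields the full invariant at termination directly, without ever needing to reason about what $S$ contains globally. The inductive step is short: squares moved into $a$ could not have fit anywhere before $a$ (else the invariant would already fail before $a$), squares not moved into $a$ do not fit into $a$ (else they would have been moved), and for pixels strictly between $a$ and the next $a'$ any violation would have existed before the deletion. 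Adopting this formulation would let you drop most of the $S$-surgery you anticipate. Your layer-by-layer analysis of the insertion case is more detailed than necessary; the paper dispatches it in one line, since assigning to the \emph{first} empty $i$-pixel in z-order immediately preserves the invariant.
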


\begin{proof}
The first part follows from Lemmas~\ref{lem:compactspace}
and \ref{lem:invcompact} and point 3. Insertions maintain the invariant,
because we assign it to the first suitable empty
pixel in z-order. Deletions can obviously always be fulfilled. We
still need to prove the important part, which is that the invariant
holds after a deletion.

We show this by proving that whenever the procedure reaches line 3 and
sets $a$, the invariant holds for all squares in
z-order up to $a$. As we only move squares in negative z-order, the
sequence of pixels $a$ refers to is increasing in z-order. Since we
have a finite number of squares, the procedure terminates after a
finite number of steps when no suitable $a$ is left. At that point the
invariant holds throughout the configuration.

Assume we are at step 3 of the procedure and the invariant holds for
all squares up to $a$. None of the squares considered to be moved to
$a$ fit anywhere before $a$ in z-order -- otherwise the invariant
would not hold for pixels before $a$. Afterwards, no square that has
not been moved to $a$ fits into $a$, because it would have been moved
there otherwise. Once we reach line 3 again, and set the new $a$, say
$a'$, consider the pixels between $a$ and $a'$ in z-order. If any
square after $a'$ would fit somewhere into a pixel between $a$ and
$a'$, then the invariant would not have held before the
deletion. Therefore, the invariant holds up to $a'$. \qed
\end{proof}

{\color{black}
Comparing our results in Section~4 to those in this section, a major
advantage of an empty initial configuration becomes apparent. For all
examined cost functions there are configurations into which no square
can be inserted at zero cost (cf. Theorem~\ref{thm:volumexample},
Corollary~\ref{cor:relvoltight}, Theorem~\ref{thm:movexample}). This
is in contrast to First Fit, which achieves insertion at zero
cost (Theorem~\ref{thm:ff}). The downside is the potentially large cost
of deletions. The thorough analysis of a strategy with provably low
cost for both insertions and deletions is the subject of future work.
}

\begin{figure}[h!]
\centering
\includegraphics[width=0.25\textwidth]{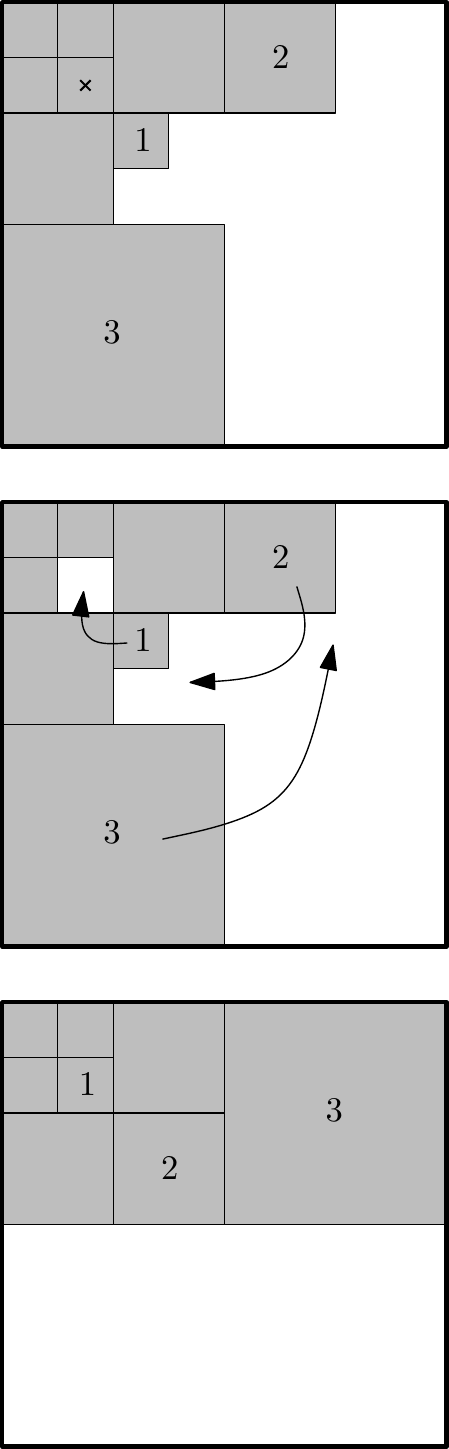}
\caption{Deleting a square causes several moves. The
deleted square is marked with a cross. Once it is unassigned, the
squares are checked in reverse z-order until square 1, which
fits. Afterwards, there is a now maximally empty pixel into which
square 2 can be moved. Finally, the same happens for square 3.}
\label{fig:invstrategy}
\end{figure}

\section{General Squares and Rectangles}\label{sec:generals}

Due to limited space and for clearer exposition,
the description in the previous three sections considered aligned squares. 
We can adapt the technique to general squares
and even rectangles at the expense of a constant factor.


{\color{black}
To accommodate a non-aligned square, we pack it like an
aligned square of the next larger volume. That is, a square of size
$s$ with $2^{i-1} < s < 2^i$ for some $i \in \{0, -1,-2,\ldots\}$ is
assigned to an $i$-pixel. This approach results in space that cannot
be used to assign squares, even though the remaining capacity
would suffice, and we can no longer guarantee to fit every valid sequence of
squares into the unit square. However,
we can guarantee to pack every such sequence into a $4$-underallocated
unit square (i.e., a $2 \times 2$ square), as every square is assigned
to a pixel that can hold no more than four times its volume. Most
importantly, our reallocation schemes continue to work in this setting
without any modifications. An example allocation is shown in
Figure~\ref{fig:quadtree}, where solid gray areas are assigned squares
and shaded areas indicate wasted space.

Note that a satisfactory reallocation scheme for arbitrary squares
with no or next to no underallocation is unlikely. Even the problem
of handling a sequence of insertions of total volume at most one,
without considering dynamic deletions and reallocation, requires
underallocation. This problem is known as {\em online square packing}, see
Fekete and Hoffmann~\cite{fh-ossp-13,fh-ossp-17}; 
currently, the best known approach results in
$5/2$-underallocation, see Brubach~\cite{brubach_improved_2014}.
}

\begin{figure}[h!]
\centering
\includegraphics[width=.7\linewidth]{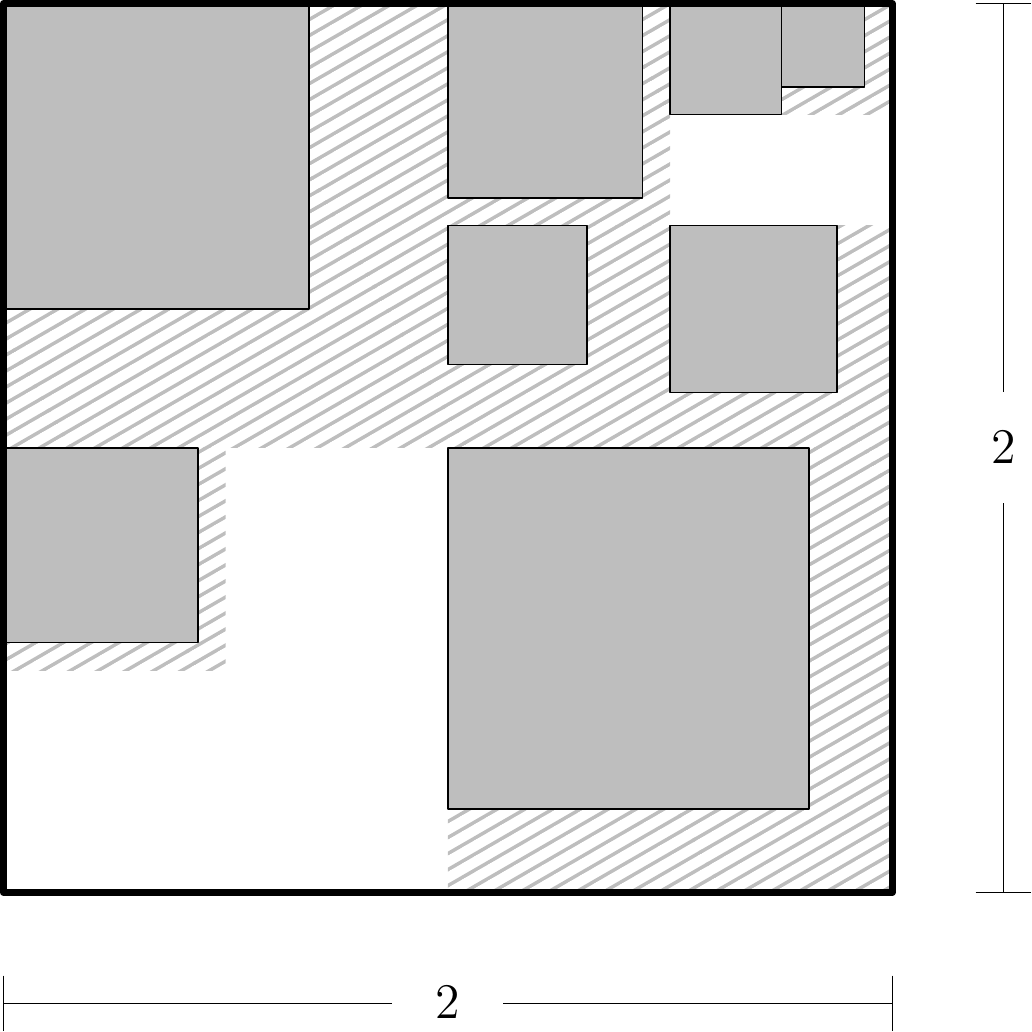}
\caption{\label{fig:quadtree}Example of a dynamically generated
  quadtree layout. {\color{black}The solid gray areas are packed
  squares. Shaded areas represent space lost due to rounding.}
}
\end{figure}

Rectangles of bounded aspect ratio $k$ are dealt with in the same way.
Also accounting for intermodule communication, every rectangle is
padded to the size of the next largest aligned square and assigned to
the node of a quadtree, at a cost not exceeding a factor of $4k$
compared to the one we established for the worst case.
{\color{black} As described in the following section, this theoretical bound
is rather pessimistic: the performance in basic simulation runs
is considerably better.}

\section{\color{black}Simulation Results}\label{sec:experiments}
We carried out a number of {\color{black} simulation runs to get an idea of the potential performance
of our approach}. For each test, we generated a random sequence of $1000$ requests 
that were chosen as \textsc{Insert($\cdot$)} (probability $0.7$) or \textsc{Delete($\cdot$)} (probability $0.3$). 
We apply a larger probability for \textsc{Insert($\cdot$)} to avoid the (relatively simple)
situation that repeatedly just a few rectangles are inserted and deleted, and in order
to observe the effects of increasing congestion. The individual modules were generated
by considering an upper
bound $b \in [0,1]$ for the side lengths of the considered squares. For
$b=0.125$, the value of the current underallocation seems to be stable except
for the range of the first $50$-$150$ requests. For $b=1$, the current
underallocation may be unstable, which could be caused by the following
simple observation: A larger $b$ allows larger rectangles that induce
$4k$-underallocations.

Our {\color{black} simulations indicate 
the theoretical worst-case bound of $1/4k$ may be overly pessimistic, see Figures~\ref{fig:experimentsA}--~\ref{fig:experimentsF}}. 
\textcolor{black}{In particular, the $x$-axis represents the number of operations and the $y$-axis represents the inverse value of underallocations. Furthermore, the red curves illustrate the inverse values of the underallocation and lie below the worst case values of $4k$.}
Taking into account that a purely one-dimensional approach cannot provide
an upper bound on the achievable underallocation, this {\color{black} provides reason to be optimistic about the potential
practical performance.}

{\color{black} A simulation of} the First-Fit approach for different values of
$k$ and upper bounds of $b = 0.125$ and $b=1$ for the side length of the
considered squares {\color{black} is shown in Figures~\ref{fig:experimentsA}--~\ref{fig:experimentsF}.
Each diagram illustrates the results of a simulation of $1000$ requests
that are randomly chosen as \textsc{Insert($\cdot$)} (probability $0.7$) or
\textsc{Delete($\cdot$)} (probability $0.3$).  We apply a larger probability
for \textsc{Insert($\cdot$)} to avoid the situation that repeatedly just a few
rectangles are inserted and deleted. The red graph shows the total current
underallocation after each request. The green graph shows the average of the
total underallocation in the range between the first and the current request.
We denote by~$c$ the number of collisions, i.e., the situations in that an
\textsc{Insert($\cdot$)} cannot be processed.}

\begin{figure}[ht]
\begin{center}
	\includegraphics[height=5.2cm]{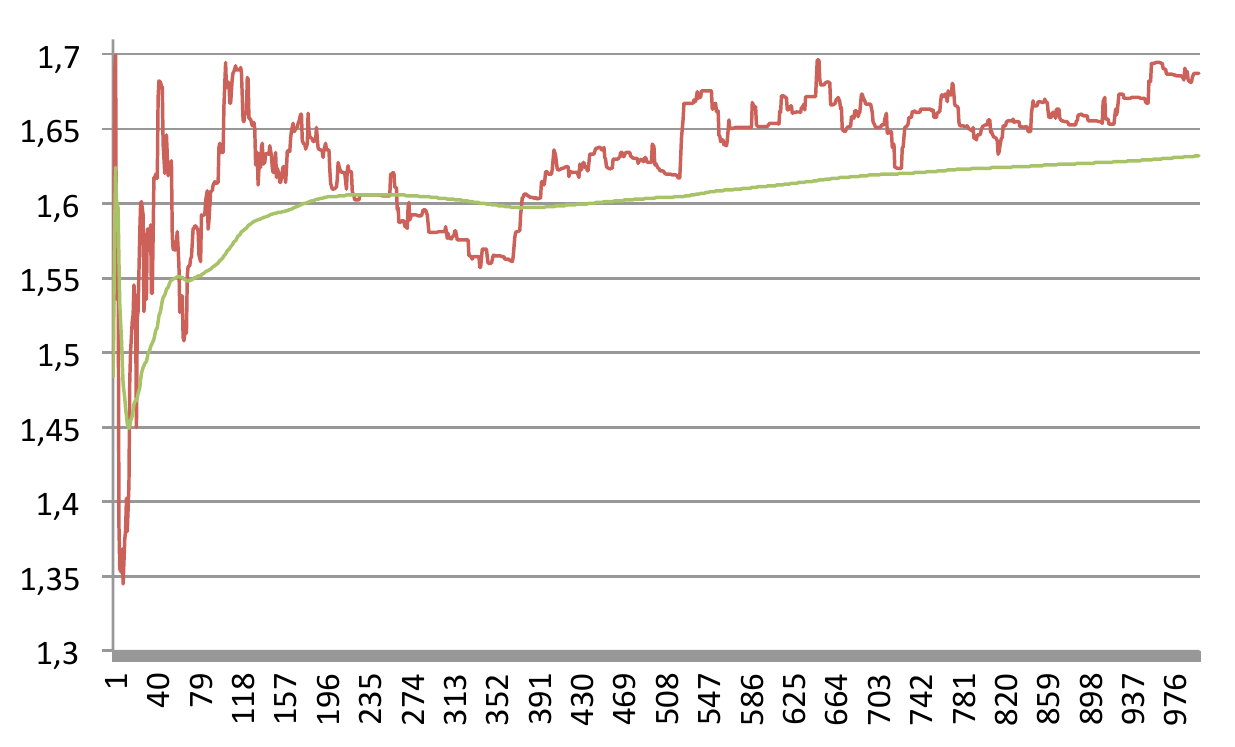}
 \end{center}
 \caption{\textcolor{black}{Number of operations ($x$-axis) vs. the inverse value of underallocation ($y$-axis) for the setting} $k=1$, $b=0.125$, $c=219$}
\label{fig:experimentsA}
\end{figure}

\begin{figure}[ht]
\begin{center}
	\includegraphics[height=5.2cm]{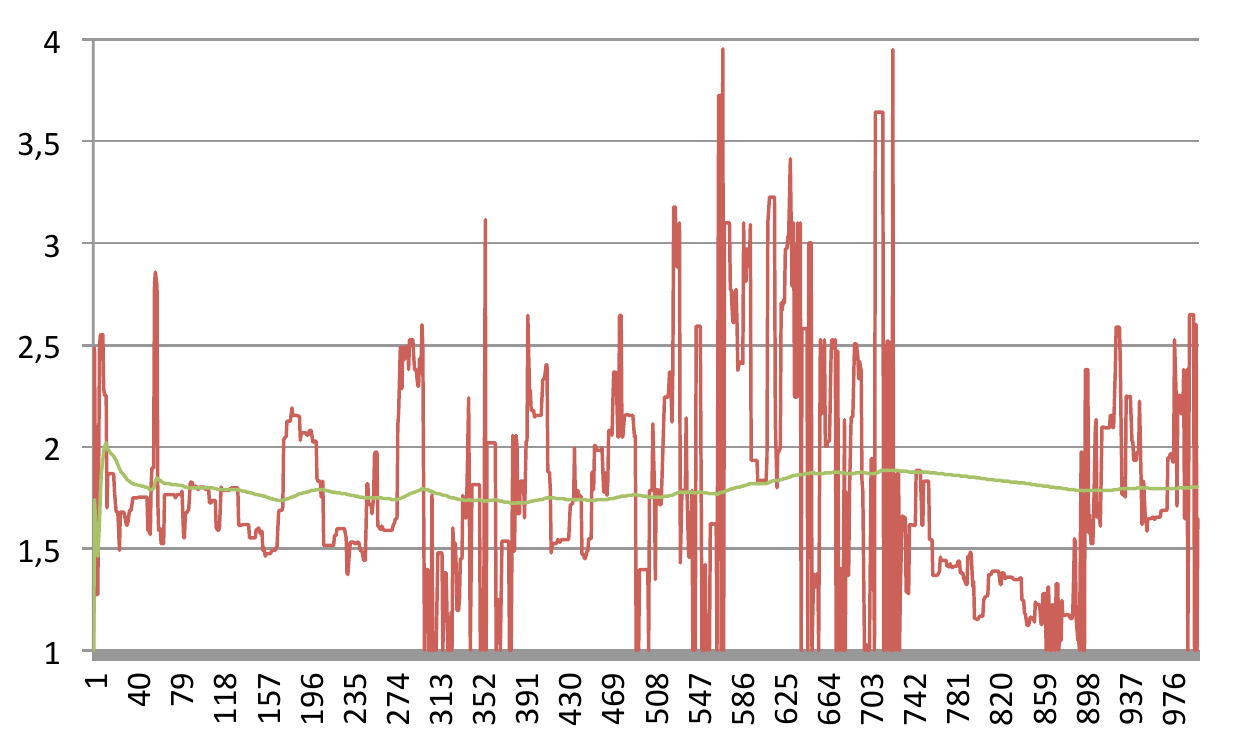}
 \end{center}
 \caption{\textcolor{black}{Number of operations ($x$-axis) vs. the inverse value of underallocation ($y$-axis) for the setting} $k=1$, $b=1$, $c=419$}
\label{fig:experimentsB}
\end{figure}

\begin{figure}[ht]
\begin{center}
	\includegraphics[height=5.2cm]{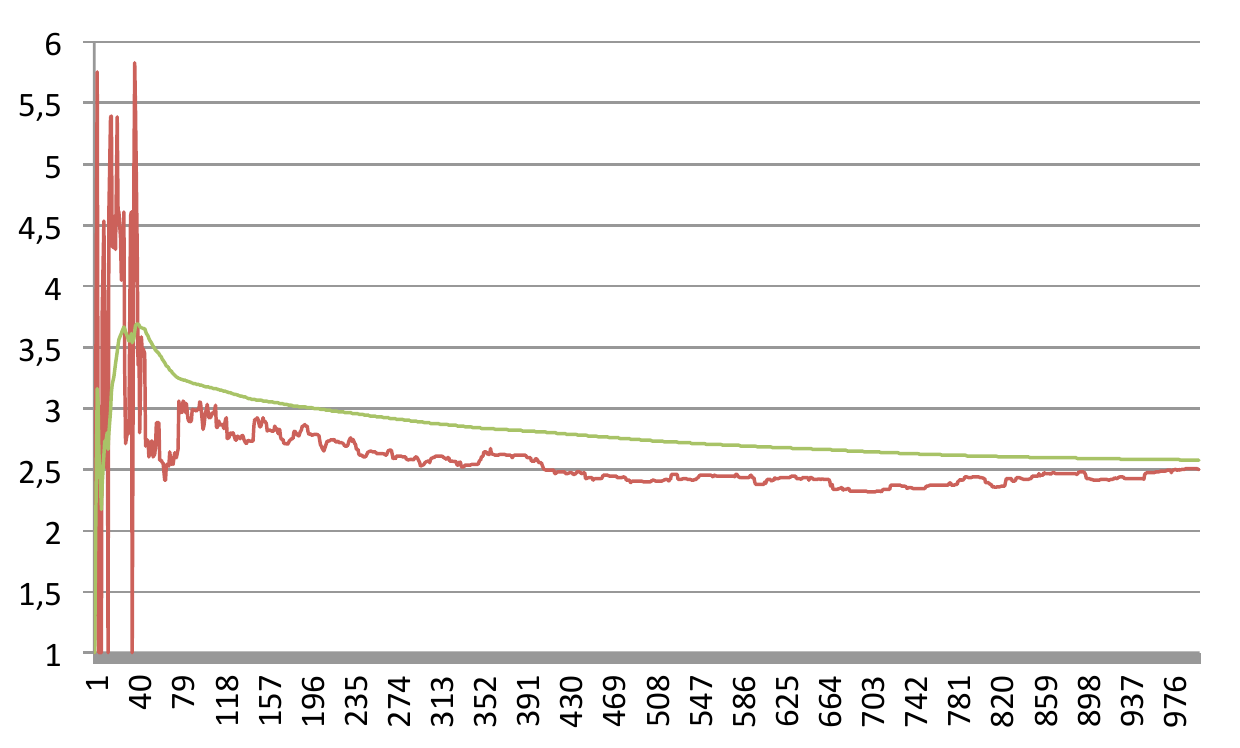}
 \end{center}
 \caption{\textcolor{black}{Number of operations ($x$-axis) vs. the inverse value of underallocation ($y$-axis) for the setting} $k=2$, $b=0.125$, $c=232$}
\label{fig:experimentsC}
\end{figure}

\begin{figure}[ht]
\begin{center}
	\includegraphics[height=5.2cm]{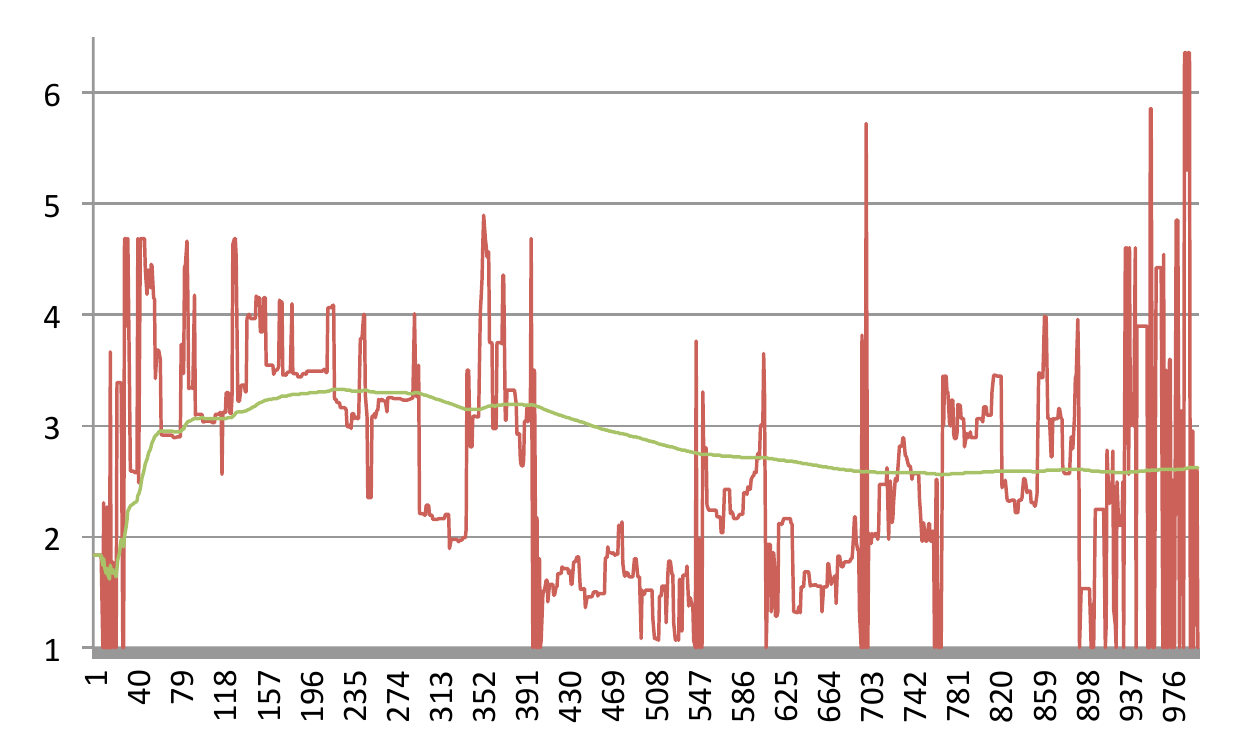}\\
 \end{center}
 \caption{\textcolor{black}{Number of operations ($x$-axis) vs. the inverse value of underallocation ($y$-axis) for the setting} $k=2$, $b=1$, $c=438$}
\label{fig:experimentsD}
\end{figure}

\begin{figure}[t]
\begin{center}
	\includegraphics[height=5.2cm]{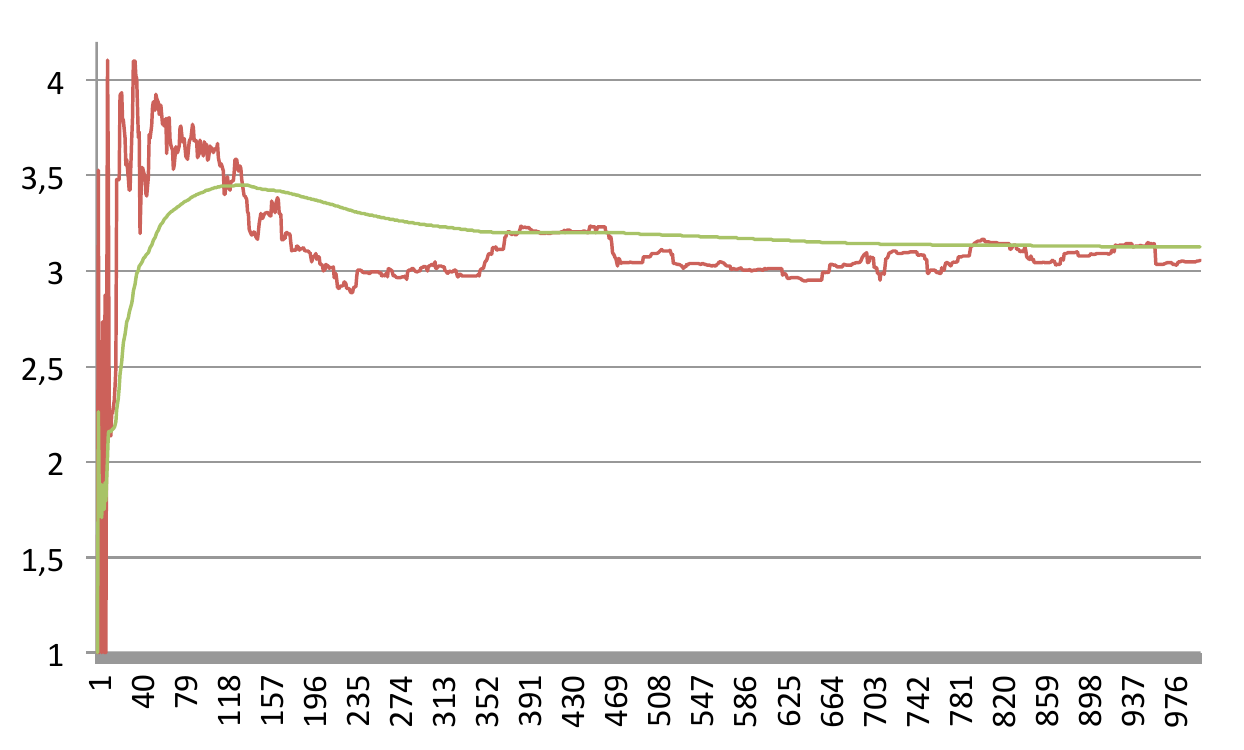}
 \end{center}
 \caption{\textcolor{black}{Number of operations ($x$-axis) vs. the inverse value of underallocation ($y$-axis) for the setting} $k=5$, $b=0.125$, $c=264$}
\label{fig:experimentsE}
\end{figure}

\begin{figure}[t]
\begin{center}
	\includegraphics[height=5.2cm]{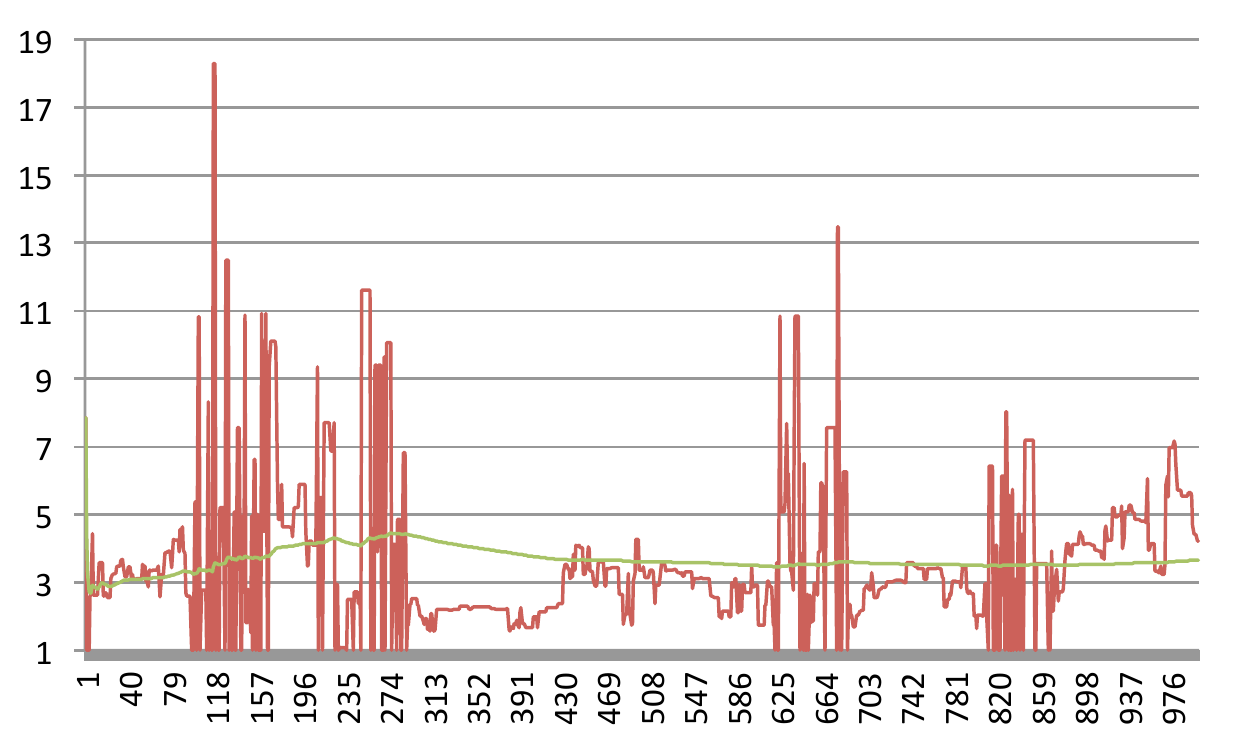}\\
 \end{center}
 \caption{\textcolor{black}{Number of operations ($x$-axis) vs. the inverse value of underallocation ($y$-axis) for the setting} $k=5$, $b=1$, $c=421$}
\label{fig:experimentsF}
\end{figure}

\section{Conclusions}
\label{sec:conc}

We have presented a data structure for exploiting
the full dimensionality of dynamic geometric storage and reallocation
tasks, such as online maintenance of the module layout for an FPGA.
These first results indicate that our approach is suitable for
making progress over purely one-dimensional approaches.
There are several possible refinements and extensions, including
a more sophisticated way of handling rectangles inside of 
square pieces of the subdivision, handling heterogeneous chip areas,
and advanced algorithmic methods. These will be addressed in future work.

{\color{black}
Another aspect of forthcoming work is an explicitly self-refining intermodule wiring. 
As indicated in Section~3 (and illustrated in Figure~\ref{fig:config}),
dynamically maintaining this communication infrastructure can be envisioned along the subdivision
of the recursive quadtree structure: making the routing a certain proportion of each cell area provides
a dynamically adjustable bandwidth, along with intersection-free routing, as shown in Figure~\ref{fig:config}.
First steps in this direction have been taken with an MA thesis~\cite{meyer}, with more work
to follow; this also addresses the aspect of robustness of communication in a hostile
environment that may cause individual connections to fail.}

\newpage
\balance
\bibliography{literature,mtjmr,lit}
\end{document}